\theoremstyle{plain}
\newtheorem{theorem}{Theorem}
\theoremstyle{definition}
\newtheorem{Remarks}{Remark}
\theoremstyle{plain}
\newtheorem{lemma}{Lemma}
\theoremstyle{plain}
\begin{document}

\addtolength{\abovedisplayskip}{-3.2pt}
\addtolength{\belowdisplayskip}{-3.2pt}
\addtolength{\abovedisplayshortskip}{-2pt}
\addtolength{\belowdisplayshortskip}{-2pt}
\addtolength{\belowcaptionskip}{-5pt}


\title{Spectrum Sensing using Distributed Sequential Detection via Noisy Reporting MAC}
\author{Jithin~K.~Sreedharan and Vinod~Sharma,~\IEEEmembership{Senior~Member,~IEEE} 
\thanks{Jithin K.\ Sreedharan is with INRIA Sophia Antipolis, France. Email: jithin.sreedharan@inria.fr}
\thanks{Vinod Sharma is with Department of Electrical Communication Engineering, Indian Institute of Science, Bangalore, India. Email: vinod@ece.iisc.ernet.in}
\thanks{Preliminary versions of this paper have been presented in NCC 2011, WCNC 2011 and Allerton 2011.}}
\maketitle

\begin{abstract}
This paper considers cooperative spectrum sensing algorithms for Cognitive Radios which focus on reducing the number of samples to make a reliable detection. We develop an energy efficient detector with low detection delay using decentralized sequential hypothesis testing. Our algorithm at the Cognitive Radios employs an asynchronous transmission scheme which takes into account the noise at the fusion center. We start with a distributed algorithm, DualSPRT, in which Cognitive Radios sequentially collect the observations, make local decisions using SPRT (Sequential Probability Ratio Test) and send them to the fusion center. The fusion center sequentially processes these received local decisions corrupted by noise, using an SPRT-like procedure to arrive at a final decision. We theoretically analyse its probability of error and average detection delay. We also asymptotically study its performance. Even though DualSPRT performs asymptotically well, a modification at the fusion node provides more control over the design of the algorithm parameters which then performs better at the usual operating probabilities of error in Cognitive Radio systems. We also analyse the modified algorithm theoretically. Later we modify these algorithms to handle uncertainties in SNR and fading.
\end{abstract}

\begin{keywords}
Cooperative spectrum sensing, distributed detection, sequential detection. 
\end{keywords}

\section{Introduction}
Presently there is a scarcity of spectrum due to the proliferation of wireless services. Cognitive Radios (CRs) are proposed as a solution to this problem. They access the spectrum licensed to existing communication services (primary users) opportunistically and dynamically without causing much interference to the primary users. This is made possible via spectrum sensing by the Cognitive Radios (secondary users), to gain knowledge about the spectrum usage by the primary devices. However due to the strict spectrum sensing requirements (\cite{Akyildiz_PC2011}) and the various inherent wireless channel impairments spectrum sensing has become one of the main challenges faced by the Cognitive Radios.

Multipath fading, shadowing and hidden node problem cause serious problems in spectrum sensing. Cooperative (decentralized or distributed) spectrum sensing in which different cognitive radios interact with each other exploiting spatial diversity, (\cite{Unnikrishnan_JSTSP2008,Akyildiz_PC2011}) is proposed as an answer to these problems. Also it reduces the probability of false alarm and the probability of miss-detection. Cooperative spectrum sensing can be either centralized or distributed (\cite{Akyildiz_PC2011}). In the centralized algorithm a central unit gathers sensing data from the Cognitive Radios and identifies the spectrum usage (\cite{Quan_SPM2008}). On the other hand, in the distributed case each secondary user (SU) collects observations, makes a local decision and sends to a fusion center (FC) to make the final decision. Centralized algorithms provide better performance but also have more communication overhead in transmitting all the data to the fusion node. In the distributed case, the information that is exchanged between the secondary users and the fusion node can be a soft decision (summary statistic) or a hard decision. Soft decisions can give better gains at the fusion center but also consume higher bandwidth at the control channels (used for sharing information among secondary users). However hard decisions provide as good a performance as soft decisions when the number of cooperative users increases (\cite{Quan_SPM2008}).

Spectrum sensing problem can be formulated in different ways, two of them being Neyman-Pearson framework (fixed sample size detection) and sequential detection framework which reduces the number of samples taken for deciding if a primary is transmitting or not. Sequential framework enables decision more quickly than the fixed sample size counterpart (\cite{Siegmund_SATC_Book}). Also, there are two types of sequential detection: one can consider detecting when a primary turns ON (or OFF) (change detection, see \cite{Li_WCOM2010,Banerjee_WCOM} and the references therein) or just testing the hypothesis whether the primary is ON or OFF (\cite{Zou_TSP,Yilmaz_Asilomar2012,Shei_PIMRC2008} and references therein). In \cite{Li_WCOM2010}, cooperative spectrum sensing under sequential change detection framework with no coordination between the secondary users is considered, and random broadcast policies and several improvements are proposed. In sequential hypothesis testing one considers the case where the status of the primary channel is known to change very slowly, e.g., detecting occupancy of a TV transmission. Usage of idle TV bands by the Cognitive network is being targeted as the first application for cognitive radio. 
In this setup (minimising the expected sensing time with constraints on probability of errors) Walds' SPRT (Sequential Probability Ratio Test) provides the optimal performance for a single Cognitive Radio (\cite{Siegmund_SATC_Book}). But the optimal solutions for cooperative setup are not available (\cite{Veeravalli1999}).

In this paper, we consider sequential hypothesis testing in cooperative setup. We first propose a decentralized algorithm DualSPRT, in which the secondary users sequentially collect the observations, make local decisions using SPRT and send them to the fusion center. Then the fusion center sequentially processes these received local decisions corrupted by noise, using a new sequential test, to arrive at a final decision. Unlike some of the previous works on cooperative spectrum sensing using sequential testing (see \cite{Yilmaz_Allerton2012_journal,Shei_PIMRC2008} and references therein) we analyse this algorithm theoretically also. Feedback from the fusion node to the CRs can possibly improve the performance. However that also requires an extra signalling channel which may not be available and has its own cost. Thus in our framework we assume that there is no feedback from the fusion center to the CRs. Furthermore, we consider the receiver noise at the fusion node and use physical layer fusion to reduce the transmission time of the decisions by the local nodes to the fusion node. 

In sequential decentralized detection framework, optimization needs to be performed jointly over sensors and fusion center policies as well as over time. Unfortunately, this problem is intractable for most of the sensor configurations (\cite{Mei_TIT2008,Veeravalli1999}). Specifically there is no optimal solution available for sensor configurations with no feedback from fusion center and limited local memory, which is more relevant in practical situations. Recently \cite{Fellouris_TIT2011} and \cite{Mei_TIT2008} proposed asymptotically optimal (order 1 (Bayes) and order 2 respectively) decentralized sequential hypothesis tests for such systems with full local memory. But these models do not consider noise at the fusion center and assume a perfect communication channel between the CR nodes and the fusion center. Also, often asymptotically optimal tests do not perform well at a finite number of observations. 

Noisy channels between local nodes and fusion center are considered in \cite{Yilmaz_Allerton2012_journal} in decentralized sequential detection framework. But optimality of the tests are not discussed and the paper is more focussed on finding the best signalling schemes at the local nodes with the assumption of parallel channels between local nodes and the fusion center. Also fusion center tests are based on the assumption of perfect knowledge of local node probability of false alarm and probability of miss-detection. 

We study asymptotic performance of DualSPRT, with fusion center noise. It can approach the optimal centralized sequential solution (in Bayes and frequentist sense), which does not consider noise at FC. We assume a MAC (Multiple Access Channel) as the reporting channel at the fusion center and the test is not based on the local node probability of error. Later we modify DualSPRT to improve its performance. The parameters of the modified algorithm are easier to fine tune also. Furthermore we introduce a new way of quantizing SPRT decisions of local nodes and extend this algorithm to cover SNR uncertainties and fading channels. We also study its performance theoretically. We have seen via simulations that our algorithm works better than the algorithm in \cite{Mei_TIT2008} and almost as well as the algorithm in \cite{Fellouris_TIT2011} even when the fusion center noise is not considered and MAC layer transmission delays are ignored in \cite{Fellouris_TIT2011} and \cite{Mei_TIT2008}.

In addition, we generalize our algorithm to include uncertainty in the received Signal to Noise Ratio (SNR) at the CRs and fading channels between primary and CR. This requires a composite hypothesis testing extension to the decentralized sequential detection problem and is not considered in any of the above references. \cite{Zou_TSP,Yilmaz_Asilomar2012} also proposed cooperative sequential algorithms for spectrum sensing, but neither of them deal with the fusion center noise and SNR uncertainty case. 

This paper is organised as follows. Section \ref{sec:DualSPRT_sys_mod} presents the model. Section \ref{sec:DualSPRT} provides the DualSPRT algorithm. An approximate theoretical performance of the algorithm is also provided. Section \ref{sec:DualSPRT_asy_opt} studies the asymptotic performance of DualSPRT. In Section \ref{sec:SPRT-CSPRT} we improve over DualSPRT. We compare the different versions so obtained and also compare them with existing asymptotically optimal decentralized sequential algorithms. Sections \ref{sec:SNR_uncertainty_fading} extends these algorithms to consider the effect of fading and SNR uncertainty. Section \ref{sec:conclusions} concludes the paper.

\section{System Model}
\label{sec:DualSPRT_sys_mod}
We consider a Cognitive Radio system with one primary transmitter and $L$ secondary users. The $L$ nodes sense the channel to detect the spectral holes. The decisions made by the secondary users are transmitted to a fusion node via a reporting MAC for it to make a final decision.

Let $X_{k,l}$ be the observation made at secondary user $l$ at time $k$. The $\{X_{k,l},~k\geq 1\}$ are independent and identically distributed (i.i.d.). It is assumed that the observations are independent across Cognitive Radios. Based on $\{X_{n,l},~n\leq k\}$ the secondary user $l$ transmits $Y_{k,l}$ to the fusion node. It is assumed that the secondary nodes are synchronised so that the fusion node receives $Y_k = \sum_{l=1}^L Y_{k,l} + Z_{k}$, where $\{Z_{k}\}$ is i.i.d. receiver noise. The fusion center uses $\{Y_k\}$ and makes a decision. The observations $\{X_{k,l}\}$ depend on whether the primary is transmitting (Hypothesis $H_1$) or not (Hypothesis $H_0$) as
{\allowdisplaybreaks
\begin{eqnarray}
\text{Under }H_0:& &\,X_{k,l}=\zeta_{k,l},  ~~~k=1,2,\ldots,\nonumber \\
\text{Under }H_1:& &\,X_{k,l}=h_lS_k+\zeta_{k,l},  ~~~k=1,2,\ldots,\nonumber
\end{eqnarray}}
where $h_l$ is the channel gain of the $l^{th}$ user, $S_k$ is the primary signal and $\zeta_{k,l}$ is the observation noise at the ${l^{th}}$ user at time $k$. We assume $\{\zeta_{k,l},k\geq 1\}$ are i.i.d. Let $N$ be the time to decide on the hypothesis by the fusion node. We assume that $N$ is much less than the coherence time of the channel so that the slow fading assumption is valid. This means that ${h_l}$ is random but remains constant during the spectrum sensing duration. 

The general problem is to develop a distributed algorithm in the above setup which solves the problem:
\begin{align}
\min E_{DD} &\stackrel{\Delta}{=}E_i[N]\nonumber \\
\label{EDD_eq}
\text{subject to } P_1({\text{Reject $H_1$}}) &\leq  \alpha_1 \,\& \, P_0({\text{Reject $H_0$}}) \leq \alpha_0,
\end{align}
where $P_i$ is the probability measure and $E_i$ the expectation when $H_i$ is the true hypothesis, $i\in \{0,1\}$, and $0\leq \alpha_0, \alpha_1 \leq 1$. We will separately consider $E_1[N]$ and $E_0[N]$. It is well known that for a single node case ($L=1$) Wald's SPRT performs optimally in terms of reducing $E_1[N]$ and $E_0[N]$ for given probability of errors. Motivated by the optimality of SPRT for a single node (and DualCUSUM in \cite{Banerjee_WCOM}), we propose using DualSPRT in the next section and study its performance. 
 
We use $P_{MD}$ for $P_1$(reject $H_1$) and $P_{FA}$ for $P_0$(reject $H_0$). In case of $E_{DD}$, hypothesis under consideration can be understood from the context. 

\section{Decentralized Sequential Tests: DualSPRT}
\label{sec:DualSPRT}
In this section we develop DualSPRT algorithm for decentralized sequential detection and also study its performance. 

\subsection{DualSPRT algorithm}
\label{sec:DualSPRT_dualsprt}
To explain the setup and analysis we start with the simple case, where the channel gain, $h_l=1$ for all $l's$. We will consider fading in the next section. DualSPRT is as follows:
\begin{enumerate}
\item Secondary node $l$, computes at step $k$, 
\begin{IEEEeqnarray}{rCl}
\label{SensorSPRT}
W_{0,l} &=& 0, \nonumber \\
W_{k,l} &=& W_{k-1, l} + \log \left[f_{1,l}\left(X_{k,l}\right)\left / f_{0,l}\left(X_{k,l} \right. \right) \right] ,k\geq1,\nonumber \IEEEeqnarraynumspace
\end{IEEEeqnarray}
where $f_{1,l}$ is the density of $X_{k,l}$ under ${H_1}$ and $ f_{0,l}$ is the density of $X_{k,l}$ under ${H_0}$ (w.r.t.\ a common distribution).
\item Secondary node $l$ transmits a constant $b_1$ at time $k$ if $W_{k,l}\geq \gamma_{1,l} $ or transmits $b_0$ when $W_{k,l}\leq -\gamma_{0,l}$. When $W_{k,l}$ does not cross the interval $(-\gamma_{0,l}, \gamma_{1,l})$, node $l$ does not transmit anything, i.e.,
\[Y_{k,l}=b_1\, \mathbb{I}\{W_{k,l}\geq \gamma_{1,l}\}+b_0 \, \mathbb{I}\{W_{k,l}\leq -\gamma_{0,l}\}\]
where $\gamma_{0,l},\gamma_{1,l}>0$ and $\mathbb{I}\{A\}$ denotes the indicator function of set A. Parameters $b_1, b_0, \gamma_{1,l}, \gamma_{0,l}$ are chosen appropriately.
\item Physical layer fusion is used at the fusion Centre, i.e.,
$Y_k = \sum_{l=1}^L Y_{k,l} + Z_{k}$, where $\{Z_{k}\}$ is the i.i.d.\ noise at the fusion node.
\item Finally, fusion center calculates the log-likelhood ratio:
\begin{IEEEeqnarray}{rCl}
\label{eq:ch3:FuseCUSUM}
 F_k &=&F_{k-1} + \log \left[g_{\mu_1}\left(Y_{k}\right)\left / g_{-\mu_0}\left(Y_{k} \right. \right) \right],\\
 F_0 &=& 0,\,\mu_0>0,\,\mu_1>0,\nonumber
\end{IEEEeqnarray}
where $g_{-\mu_0}$ is the density of $Z_{k}-\mu_0$ and $g_{\mu_1}$ is the density
of $Z_{k}+\mu_1$, $\mu_0$ and $\mu_1$ being positive constants appropriately chosen. 
\item The fusion center decides about the hypothesis at time $N$ 
where\[N=  \inf \{ k: F_k \geq \beta_1 \text{ or } F_k \leq -\beta_0\}\] and $\beta_0,\beta_1>0$. The decision at time $N$ is $H_1$ if $F_N \geq \beta_1$, otherwise $H_0$.

\end{enumerate}

Performance of this algorithm depends on ($\gamma_{1,l},\gamma_{0,l},\beta_1,\beta_0,b_1,b_0,\mu_1,\mu_0$). In particular these parameters should be chosen such that the overall probabilities of error are less than $\alpha_1$ and $\alpha_0$ respectively. Any prior information available about $H_0$ or $H_1$ can be used to decide constants (via, say, formulating this problem in the Bayesian framework; we will comment on this again). Also we choose these parameters such that the probability of false alarm/miss-detection, $P_{fa}/P_{md}$ at local nodes is higher than $P_{FA}/P_{MD}$. A good set of parameters for given SNR values can be obtained from our analysis below.

Deciding at local nodes and transmitting decisions to the fusion node reduces the transmission rate and transmit energy used by the local nodes in communication with the fusion node. Also, physical layer fusion in Step 3 reduces transmission time, but requires synchronisation of different local nodes. If synchronisation is not possible, then some other MAC algorithm, e.g., TDMA can be used with channel coding. But this will incur extra delay. 

Using sequential tests at SUs and at FC (without physical layer synchronization and fusion receiver noise) has been shown to perform well in (\cite{Zou_TSP}, \cite{Shei_PIMRC2008}). In the next subsection we analyse the performance under our setup.

\subsection{Performance Analysis}
\label{sec:DualSPRT_per_ana}
We first provide the analysis for the mean detection delay $E_{DD}$ and then for $P_{FA}$. 

KL-divergence of two probability distributions $P$ and $Q$ on the same measurable space $(\Omega,\mathcal{F})$ is defined as
\begin{equation}
\label{eq:ch3:KL-divergence}
D(P||Q)=\left\{
\begin{array}{ll}
\int \log \frac{dP}{dQ} \, dP &, \text{ if } P<<Q,\\
\infty &, \text{ otherwise ,}
\end{array}
\right.
\end{equation}
where $P<<Q$ denotes that $P$ is absolutely continuous w.r.t.\ $Q$. More explicitly, at node $l$, let
\[\delta_{i,l}=E_{i}\left[\log \frac{f_{1,l}(X_{k,l})}{f_{0,l}(X_{k,l})} \right],\, \rho^2_{i,l}=Var_{H_i} \left[\log \frac{f_{1,l}(X_{k,l})}{f_{0,l}(X_{k,l})} \right]. \]
Then $\delta_{1,l}=D(f_{1,l}||f_{0,l})$ and $\delta_{0,l}=-D(f_{0,l}||f_{1,l})$. We will assume $\delta_{i,l}$ finite throughout this paper. Sometimes we will also need $\rho^2_{i,l} < \infty$. When the true hypothesis is $H_1$, by Jensen's Inequality, $\delta_{1,l} >0$ and when it is $H_0$, $\delta_{0,l}<0$. At secondary node $l$, SPRT sum $\{W_{k,l},k\geq 0\}$ is a random walk with drift given by $\delta_{i,l}$ under the true hypothesis $H_i$. 

Let
\[N_{l}=\inf \{k: W_{k,l} \notin (-\gamma_{0,l},\gamma_{1,l})\}, \] \[N_{l}^1=\inf \{k: W_{k,l} \geq \gamma_{1,l} \} \text{ and }
N_{l}^0=\inf \{k: W_{k,l} \leq -\gamma_{0,l} \}. \]
Then $N_l=\min\{N_l^0,N_l^1\}$. Also let $N^0=\inf\{k:F_k\leq -\beta_0\}$ and $N^1=\inf\{k:F_k\geq \beta_1\}$. Then stopping time of DualSPRT, $N=\min(N^1,N^0)$.

For simplicity in the rest of this section, we take $\gamma_{1,l}=\gamma_{0,l}=\gamma$, $\beta_1=\beta_0=\beta$, $b_1=-b_0=b$ and $\mu_1=\mu_0=\mu$. Of course the analysis will carry over for the general case.

For convenience we summarize the important notation used in this paper in Table \ref{tab:ch3:list_symbols}. Notation specific to some algorithms are also mentioned separately.
{\renewcommand{\arraystretch}{1.2}
\begin{table}[!tbh]
\begin{center}
\begin{tabular}[t]{|c||l|}\hline
  Notation & Meaning\\ \hline \hline
    $L$ & Number of CRs \\ \hline
  $X_{k,l}$ & Observation at CR $l$ at time $k$  \\ \hline
  $Y_{k,l}$ & Transmitted value from CR $l$ to FC at time $k$.\\ \hline
  $Y_k$ & FC observation at time $k$ \\ \hline
  $h_l$ & Channel gain of the $l^{\text{th}}$ CR \\ \hline
  $\zeta_{k,l}$ & Observation noise at CR $l$ at time $k$ \\ \hline
  $Z_k$ & FC MAC noise at time $k$ \\ \hline
  $f_{i,l}$, $g_{\mu}$ & PDF of $X_{1,l}$ under $H_i$, PDF of $Z_k+\mu$\\ \hline
  $W_{k,l}$ & Test statistic at CR $l$ at time $k$ \\ \hline
  $F_k$ & Test statistic at FC at time $k$ $(1)$\\ \hline
  $F_k^1$, $F_k^0$ & Test statistics at FC $(2)$\\ \hline
  $\xi_k$ & LLR at FC (1)\\ \hline
  $\xi_k^*$ & LLR when all CR's transmit wrong decisions (1)\\ \hline
  $\theta_i$ & Worst case value of $E_i[\xi_k],=E_i[\xi_k^*]$ (1) \\ \hline
  $F_n^*$, $\widehat{F}_n^*$ & $\sum_{k=1}^{n}\xi_k^*$, $\sum_{k=1}^{n}|\xi_k^*|$ (1)\\ \hline
  $\mathcal{A}^i$, $\Delta(\mathcal{A}^i)$ & \{all CRs transmit $b_i$ under $H_i$ \}, $E_i[\xi_k|\mathcal{A}^i]$ (1)
\\ \hline  
  $\gamma_{1,l}, \gamma_{0,l}$ & Thresholds at CR  $l$ (1,2)\\ \hline
  $g(t)$ & Threshold at CR $(3,4)$ \\ \hline
  $\beta_1, \beta_0$ & Thresholds at FC \\ \hline
  $\mu_1,\mu_0$ & Design parameters in FC LLR \\ \hline
  $b_1,b_0$ & Transmitting values to the FC at CR $({1,3})$ \\ \hline
  $b_j^i$ & Transmitting values to the FC at CR $(2,4)$\\ \hline
  $N$ & First time $F_k$ crosses $(-\beta_0, \beta_1)$ (1)\\ \hline
  $N^1, N^0$ & First time $F_k$ crosses $\beta_1$, crosses $-\beta_0$ (1)\\ \hline
  $N_l,N_l^1,N_l^0$ & Corresponding values of $N$, $N^1$, $N^0$ at CR $l$ (1)\\ \hline
  $N_l(g,c)$ & First time $W_{n,l}$ crosses $g(nc)$ at CR $l$ $(3,4)$\\ \hline
  $\tau_\beta$, $T_\beta$ & First time $F_k^0$ crosses $-\beta_0$ , $F_k^1$ crosses $\beta_1$ (2)\\ \hline
  $\delta_{i,l},\rho^2_{i,l}$ & Mean and variance of LLR at CR $l$ under $H_i$ \\ \hline
  $\delta_{i,FC}^j$ & Mean of LLR at FC under $H_i$ when $j$ CRs transmit \\ \hline
  $t_j$ & Time epoch when $\delta_{i,FC}^{j-1}$ changes to $\delta_{i,FC}^j$ $(1)$\\ \hline
  $T_j$ & Time epoch when $\delta_{i,FC}^{j-1}$ changes to $\delta_{i,FC}^j$ $(2)$\\ \hline
  $\bar{F_j}$ & $E[F_{t_j-1}]$\\ \hline
  $D_{tot}^0$, $D_{tot}^1$ & $\sum_{l=1}^L D(f_{0,l}||f_{1,l})$, $\sum_{l=1}^L D(f_{1,l}||f_{0,l})$\\ \hline
  $r_l$, $\rho_l$ & $D(f_{0,l}||f_{1,l})/D_{tot}^0$, $D(f_{1,l}||f_{0,l})/D_{tot}^1$\\ \hline
  $\tau_l(c)$ & Last time RW with drift $\delta_{0,l}$ will be above $-|\log c|$\\ \hline
  $\tau(c)$ & $\displaystyle \max_{1\leq l \leq L} \tau_l(c)$ \\ \hline
  $R_i$ & $\displaystyle\min_{1\leq l \leq L} -\log \inf_{t \geq 0} E_i\Big[\exp \left({-t \log \frac{f_{1,l(X_{1,l})}}{f_{0,l}(X_{1,l})}}\right)\Big]$ \\ \hline
  $G_i$, $\widehat{G}_i$, $g_i$, $\widehat{g}_i$ & CDF of $|\xi_1^*|$, $\xi_1^*$, MGF of $|\xi_1^*|$, $\xi_1^*$\\ \hline
  $\Lambda_i(\alpha)$, $\widehat{\Lambda}_i(\alpha)$ & $\sup_{\lambda}(\alpha \lambda-\log g_i(\lambda))$, $\sup_{\lambda}(\alpha \lambda-\log \widehat{g}_i(\lambda))$ \\ \hline
  $\alpha_i^+$ & $\text{ess }\sup |\xi_1^*|$ \\ \hline
  $\mathcal{R}_c(\delta)$ & Bayes Risk of test $\delta$ with cost $c$\\ \hline
  & First time RW \\
  $\nu(a)$ & $\{\log \frac{g_{\mu_1(Z_k)}}{g_{-\mu_0(Z_k)}}+(\Delta(\mathcal{A}^0)-E_0[\log \frac{g_{\mu_1(Z_k)}}{g_{-\mu_0(Z_k)}}])$  \\
  & $ k \geq \tau(c)+1\}$ crosses $a$.\\ \hline

\end{tabular}\caption{List of important notations. 1- DualSPRT, 2- SPRT-CSPRT, 3- GLR-SPRT, 4- GLR-CSPRT}
\label{tab:ch3:list_symbols}
\end{center}
\end{table}}

\subsubsection{$E_{DD}$ Analysis}
\label{subsub:ch3:edd_analysis}
At the fusion node $F_k$ crosses $\beta$ under ${H_1}$ when a sufficient number of local nodes transmit $b_1$. The dominant event occurs when the number of local nodes transmitting are such that the mean value of the increments of the sum ${F_k}$ will just have turned positive. In the following we find the mean time to this event and then the time to cross $\beta$ after this. The $E_{DD}$ analysis is same under hypothesis $H_0$ and $H_1$. Hence we provide the analysis for $H_1$.

The following lemmas provide justification for considering only the events $\{N_l^i\}$ and $\{N^i\}$ for analysis of $E_{DD}=E_i[N]$.
\begin{lemma}
\label{lemma:ch3:N_l_N_wp1}
For $i=0,1$, $P_{i}(N_l=N_l^i) \to 1$ as $\gamma \to \infty$ and $P_{i}(N=N^i) \to 1$ as $\gamma \to \infty$ and $\beta \to \infty$.
\end{lemma}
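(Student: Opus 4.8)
The plan is to prove both assertions through a single mechanism: a random walk whose drift has a definite sign escapes to $\pm\infty$ and therefore has an almost surely finite extremum on the opposite side, so the chance of the walk ever reaching a far-away boundary on that side vanishes as the threshold grows. I would carry out the case $i=1$ in detail; the case $i=0$ follows by the symmetric argument, reversing the roles of the two thresholds and of $+\infty$ and $-\infty$.

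For the first statement I would start from the fact, already recorded above, that under $H_1$ the local statistic $\{W_{k,l}\}$ is a random walk with positive drift $\delta_{1,l}=D(f_{1,l}||f_{0,l})>0$. By the strong law of large numbers $W_{k,l}/k \to \delta_{1,l}$ almost surely, hence $W_{k,l}\to+\infty$ a.s., and consequently the running infimum $\underline{W}_l:=\inf_{k\geq 0} W_{k,l}$ is finite a.s. On the event $\{\underline{W}_l>-\gamma\}$ the walk never touches $-\gamma$, so $N_l^0=\infty$ and therefore $N_l=N_l^1$. Since $\underline{W}_l$ is a.s.\ finite, $P_1(\underline{W}_l>-\gamma)\to 1$ as $\gamma\to\infty$, which gives $P_1(N_l=N_l^1)\to 1$. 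This is just the classical statement that the SPRT's probability of exiting at the wrong boundary vanishes as the thresholds grow.

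For the second statement I would apply the same idea to $\{F_k\}$, the only extra work being to establish that $F_k\to+\infty$ a.s.\ despite its drift changing over time as local nodes begin to transmit. Under $H_1$, each local walk drifts to $+\infty$, so for every node $l$ there is an a.s.\ finite random time after which $W_{k,l}\geq\gamma$ for all subsequent $k$; hence after the maximum of these $L$ times every node transmits $b$, the received value is $Y_k=Lb+Z_k$, and the increments of $F_k$ become i.i.d.\ with mean $\delta_{1,FC}^L$, which is positive by the choice of $\mu$. The finitely many earlier increments (including any transient ``wrong'' transmissions $b_0$ that may occur at a fixed $\gamma$) do not affect the asymptotic slope, so the strong law again gives $F_k\to+\infty$ a.s.\ and hence $\underline{F}:=\inf_{k\geq 0}F_k$ is a.s.\ finite. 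Exactly as before, on $\{\underline{F}>-\beta\}$ the statistic never reaches $-\beta$ while, because $F_k\to+\infty$, it surely reaches $\beta$, so $N=N^1$; therefore $P_1(N=N^1)\geq P_1(\underline{F}>-\beta)\to 1$ as $\beta\to\infty$.

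The step I expect to require the most care is the justification that $F_k\to+\infty$ almost surely, because the increments of $F_k$ are not i.i.d.\ while the number of transmitting nodes is still changing, and the duration of this initial small- or zero-drift phase grows with $\gamma$. The clean way around this is to observe that only the eventual drift matters: once all $L$ nodes have permanently crossed their thresholds the increments are i.i.d.\ with positive mean, and the transient phase contributes only an a.s.\ finite amount to $F_k$ (and can only lower the a.s.\ finite infimum $\underline{F}$), so it cannot change the limit. With this observation the result holds for each fixed $\gamma$ as $\beta\to\infty$, and the joint limit stated in the lemma follows; should a genuinely uniform joint limit be desired, one would additionally invoke a moment bound such as $\rho^2_{i,l}<\infty$ to control the depth of the initial excursion of $F_k$.
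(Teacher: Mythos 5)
Your proof is correct and takes essentially the same route as the paper's: a random walk whose drift has the correct sign has an a.s.\ finite extremum on the wrong side, so the probability of crossing the wrong boundary vanishes as $\gamma \to \infty$, and the same reasoning applied to $F_k$ (whose increments eventually have mean of the correct sign under $H_i$) yields $P_i(N = N^i) \to 1$. If anything, your handling of the second step is more careful than the paper's, which disposes of it in one line by asserting that the mean of the increments of $F_k$ becomes positive under $H_1$ (negative under $H_0$); your explicit treatment of the non-i.i.d.\ transient phase and your remark that the joint $\gamma,\beta \to \infty$ limit needs control of the initial excursion of $F_k$ flag a genuine point that the paper glosses over.
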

\begin{IEEEproof}
From random walk results (\cite[Chapter IV]{GUT_BOOK_2009}) we know that if a random walk has negative drift then its maximum is finite with probability one. This implies that $P_{i}(N_l^j<\infty) \to 0$ as $\gamma \to \infty$ for $i \neq j$ but $P_{i}(N_l^i<\infty)=1$ for any $\gamma < \infty$. Thus $P_{i}(N_l=N_l^i) \to 1$ as $\gamma \to \infty$. This also implies that as $\gamma \to \infty$, the mean of increments of $F_k$ is positive for $H_1$ and negative for $H_0$. Therefore, $P_{i}(N=N^i) \to 1$ as $\gamma \to \infty$ and $\beta \to \infty$.
\end{IEEEproof}

\begin{lemma}
\label{lemma:ch3:N_Ni_exp_conv}
Under $H_i$, $i=0,1$ and $j \neq i$,
\begin{enumerate}
\item[(a)] $|N_l-N_l^i| \to 0$ a.s. as $\gamma \to \infty$ and $\lim_{\gamma \to \infty} \frac{N_l}{\gamma}=\lim_{\gamma \to \infty} \frac{N_l^i}{\gamma}=\frac{1}{D(f_{i,l}||f_{j,l})}$ a.s. and in $L^1$.
\item[(b)] $|N-N^i| \to 0$ a.s. and $\lim \frac{N}{\beta}=\lim \frac{N^i}{\beta}$ a.s. and in $L^1$, as $\gamma \to \infty $ and $\beta$ $\to \infty$.
\end{enumerate}   
\end{lemma}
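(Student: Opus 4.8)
The plan is to handle (a) and (b) in parallel, each time reducing the two-sided exit problem to a one-sided first-passage problem and then invoking the classical first-passage asymptotics for random walks with nonzero drift (\cite{GUT_BOOK_2009}). Throughout I fix the true hypothesis to be $H_i$ and, to be concrete, take $i=1$; the case $i=0$ is symmetric, with the roles of the upper and lower boundaries interchanged. For part (a), recall that under $H_1$ the statistic $\{W_{k,l}\}$ is a random walk with positive drift $\delta_{1,l}=D(f_{1,l}||f_{0,l})>0$. By the SLLN $W_{k,l}/k\to\delta_{1,l}>0$, so $W_{k,l}\to+\infty$ and hence $m_l=\inf_{k\geq 0}W_{k,l}>-\infty$ a.s. (equivalently, the negated walk has negative drift and finite maximum, as used in Lemma \ref{lemma:ch3:N_l_N_wp1}). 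For every $\omega$ and every $\gamma>-m_l(\omega)$ the walk never reaches $-\gamma$, so $N_l^0=\infty$ and therefore $N_l=\min(N_l^0,N_l^1)=N_l^1=N_l^i$. Since $m_l>-\infty$ a.s., for a.e.\ $\omega$ there is a finite $\gamma^*(\omega)$ beyond which $N_l=N_l^i$ exactly, which is the claimed $|N_l-N_l^i|\to 0$ a.s.\ (the a.s.\ sharpening of Lemma \ref{lemma:ch3:N_l_N_wp1}).

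Next, $N_l^i=N_l^1$ is precisely the first time the positive-drift walk $W_{k,l}$ crosses the level $\gamma$. The renewal/first-passage theorem for random walks with finite positive mean gives $N_l^1/\gamma\to 1/\delta_{1,l}=1/D(f_{1,l}||f_{0,l})$ a.s.; under the standing assumption $\rho^2_{1,l}<\infty$, the expected overshoot is bounded and the family $\{N_l^1/\gamma\}$ is uniformly integrable, upgrading this to $L^1$ convergence of the mean (\cite{GUT_BOOK_2009}). Combining with the previous step, $N_l=N_l^i$ for all large $\gamma$, so $N_l/\gamma$ shares the same a.s.\ limit; and since $N_l\leq N_l^i$ pointwise, uniform integrability and hence the $L^1$ limit transfer as well. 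This proves (a).

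For part (b) the same strategy applies at the fusion node, but the increments of $F_k$ are no longer i.i.d.: their law changes at the random instants $N_l^1$ at which the local nodes begin transmitting $b_1$. The structural facts I would use are: (i) under $H_1$ each $W_{k,l}\to+\infty$, so there is an a.s.\ finite last-exit time after which \emph{every} node permanently transmits $b_1$; and (ii) once all $L$ nodes transmit $b_1$ one has $Y_k=Lb+Z_k$, whence the FC increment $\log[g_{\mu_1}(Y_k)/g_{-\mu_0}(Y_k)]$ is i.i.d.\ with strictly positive mean $\delta_{1,FC}^L>0$ by the parameter-design choice. Thus after an a.s.\ finite transient $F_k$ acquires positive drift, so $F_k\to+\infty$, giving $N^1<\infty$ a.s.\ and $\inf_k F_k>-\infty$ a.s. For $\beta$ exceeding this (finite) depth the lower boundary $-\beta$ is never reached, so $N^0=\infty$ and $N=N^1=N^i$, i.e.\ $|N-N^i|\to 0$ a.s. Since $N=N^i$ eventually and $N\leq N^i$, the ratios $N/\beta$ and $N^i/\beta$ then share the same a.s.\ and $L^1$ limit, the latter again by uniform integrability coming from the finite variance of the FC increments over the terminal positive-drift phase.

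The main obstacle is squarely in (b): controlling $F_k$ during the transient before all nodes latch onto $b_1$. Unlike (a), the depth $-\inf_k F_k$ need not stay bounded as $\gamma\to\infty$, since over a transient whose length scales like $\gamma$ the walk $F_k$ can drift or diffuse downward by an amount growing with $\gamma$; hence the step ``$\beta$ exceeds the depth, so $N^0=\infty$'' is only valid in the operating regime where $\beta$ grows at least as fast as that transient depth. Making this interplay between $\gamma\to\infty$ and $\beta\to\infty$ precise, together with verifying the monotonicity and correct sign of the staged drifts $\delta_{1,FC}^j$ coming from the parameter design, is the delicate part; the i.i.d.\ first-passage asymptotics used for the final positive-drift phase are then routine.
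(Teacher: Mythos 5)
Your part (a) is correct and follows essentially the paper's own route: the paper proves it via the sandwich $N_l^0\,\mathbb{I}\{N_l^0<N_l^1\}\le N_l\le N_l^0$ (its inequality \eqref{eq:ch3:lemma2_proof_bound}) together with Gut's first-passage theorems, and your pathwise version --- $m_l=\inf_{k}W_{k,l}>-\infty$ a.s.\ under the drift of the true hypothesis, hence $N_l=N_l^i$ exactly once $\gamma>-m_l(\omega)$ --- is in fact the cleaner justification of the almost-sure claim, which the paper's inference ``since $P_0[N_l^0<N_l^1]\to 1$, $|N_l^0-N_l|\to 0$ a.s.'' leaves implicit.

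For part (b), however, you stop precisely where the proof still has to be made, and you say so yourself; that is a genuine gap in your write-up, even though the paper disposes of (b) with the single sentence ``Similarly the corresponding results hold for $N$ and $N^0$.'' The obstacle you flag is real, not a removable technicality: in the symmetric design ($\mu_1=\mu_0$) the FC increments have mean zero before $t_1$, and $t_1=\Theta(\gamma)$, so $\inf_{k\le t_1}F_k$ is of order $-\sqrt{\gamma}$ by diffusive scaling. Consequently, if $\beta=o(\sqrt{\gamma})$, then under $H_1$ one has $P_1(N^0<t_1\le N^1)\to 1$, so $N=N^0\ne N^i$ with probability tending to one and $|N-N^i|\not\to 0$; the lemma (and the fusion-center half of Lemma \ref{lemma:ch3:N_l_N_wp1}, on which the paper's ``similarly'' tacitly leans) is false for arbitrary joint limits. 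What rescues the statement is the coupling under which the lemma is actually used: in Section \ref{sec:DualSPRT_asy_opt} the thresholds are $\gamma_{0,l}=r_l|\log c|$, $\gamma_{1,l}=\rho_l|\log c|$, $\beta_0=\beta_1=|\log c|$, i.e.\ $\beta\asymp\gamma$. Under any such coupling your own argument closes: with finite-variance FC increments, the law of the iterated logarithm bounds the excursions of $F_k$ over the $O(\gamma)$-length transient by $O(\sqrt{\gamma\log\log\gamma})=o(\beta)$ a.s., so the lower barrier is a.s.\ never hit, $N=N^1$ eventually, and the staged-drift first-passage asymptotics (the stage epochs $t_j/\gamma$ converge a.s.\ by part (a), and the SLLN applies within each stage) yield the common a.s.\ limit of $N/\beta$ and $N^i/\beta$, with $L^1$ convergence by your uniform-integrability observation. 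You should state this $\beta\asymp\gamma$ (or at least $\sqrt{\gamma\log\log\gamma}=o(\beta)$) regime explicitly; with it your proof of (b) is complete and indeed more rigorous than the paper's, without it the claimed step ``$\beta$ exceeds the depth, so $N^0=\infty$'' fails.
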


\begin{IEEEproof}
Under $H_0$,
\begin{equation}
\label{eq:ch3:lemma2_proof_bound}
N_l^0\,\mathbb{I}\{N_l^0<N_l^1\} \leq N_l \leq N_l^0,
\end{equation}
and since $P_0[N_l^0 < N_l^1] \to 1$ as $\gamma \to \infty$, $|N_l^0-N_l| \to 0$ a.s. as $\gamma \to \infty$. Also from Random Walk results (\cite[p.~88]{GUT_BOOK_2009}), $N_l^0/\gamma \to 1/D(f_{0,l}||f_{1,l})$ a.s. and $E[N_l^0]/\gamma \to 1/D(f_{0,l}||f_{1,l})$. Thus we also obtain $N_l/\gamma \to 1/D(f_{0,l}||f_{1,l})$ a.s. and in $L^1$. Similarly the corresponding results hold for $N$ and $N^0$ as $\gamma$ and $\beta$ $\to \infty$. (\ref{eq:ch3:lemma2_proof_bound}) holds in the expected sense also. 
\end{IEEEproof}

Thus when $\gamma$ is large, we can approximate $E_1[N_l]$ by $\gamma D(f_{1,l}||f_{0,l})$. Also under $H_1$, by central limit theorem for the first passage time $N_l^1$ (Theorem 5.1, Chapter III in \cite{GUT_BOOK_2009}),
\begin{equation}
\label{eq:ch3:N_l_1_pdf}
N_{l}^1\sim\mathcal{N}(\frac{\gamma}{\delta_{1,l}}, \frac{\rho^2_{1,l}\, \gamma}{\delta_{1,l}^3}),
\end{equation}
where $\mathcal{N}(a,b)$ denotes Gaussian distribution with mean $a$ and variance $b$. From Lemma \ref{lemma:ch3:N_Ni_exp_conv}, we can use this result for $N_l$ also. Similarly we can obtain the results under $H_0$ and at the fusion node. Let $\delta_{i,FC}^j$ be the mean of increments of the fusion center test sum $F_k$, under $H_i$, when $j$ local nodes are transmitting. Let $t_j$ be the point at which the mean of increments of $F_k$ changes from $\delta_{i,FC}^{j-1}$ to $\delta_{i,FC}^j$ and let $\bar{F_j}=E[F_{t_j-1}]$, the mean value of $F_k$ just before transition epoch $t_j$. The following lemma holds.

\begin{lemma}
\label{lemma:ch3:VS_argument_reg_finiteness_st}
Under $H_i$, $i=0,1$, as $\gamma \to \infty$,\\
$P_{i}(\text{Decision at time $t_k$ is $H_i$ and}$\\ \hspace*{0.6 cm}$\text{$t_k$ is the $k^{th}$ order statistics of $N_1^i,N_2^i,\ldots,N_L^i$}) \to 1$.
\end{lemma}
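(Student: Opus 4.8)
The plan is to reduce the joint event in the statement to the single ``all nodes decide correctly'' event and then invoke Lemma \ref{lemma:ch3:N_l_N_wp1} together with the finiteness of $L$. Concretely, under $H_i$ define
\[
B_\gamma \;=\; \bigcap_{l=1}^{L}\{N_l=N_l^i\},
\]
the event that every local SPRT exits its interval $(-\gamma_{0,l},\gamma_{1,l})$ through the \emph{correct} boundary. By Lemma \ref{lemma:ch3:N_l_N_wp1}, $P_i(N_l=N_l^i)\to 1$ for each fixed $l$ as $\gamma\to\infty$, so a union bound over the finitely many nodes gives
\[
P_i(B_\gamma^c)\;\le\;\sum_{l=1}^{L}P_i(N_l\neq N_l^i)\;\longrightarrow\; 0,
\]
hence $P_i(B_\gamma)\to 1$. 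The rest of the argument is carried out deterministically on $B_\gamma$.

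On $B_\gamma$, the first boundary crossed by node $l$ is the one associated with $H_i$, so from time $N_l^i$ onward node $l$ reports the correct symbol $b_i$ (the local decision is latched once its SPRT terminates) and never the wrong one. Consequently the number of nodes contributing the correct symbol to $Y_k=\sum_{l=1}^{L} Y_{k,l}+Z_k$ is nondecreasing in $k$ and increases by exactly one at each of the instants $N_1^i,\dots,N_L^i$. Since these are precisely the epochs at which the drift of the fusion statistic $F_k$ changes value (from $\delta_{i,FC}^{j-1}$ to $\delta_{i,FC}^{j}$), the ordered crossing times coincide with the transition epochs: $t_k$ equals the $k$-th order statistic of $N_1^i,\dots,N_L^i$, and at $t_k$ all transmitting nodes send $b_i$, so the decision registered at $t_k$ is $H_i$. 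Thus both assertions hold on $B_\gamma$, and since $P_i(B_\gamma)\to1$ the lemma follows.

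The step I expect to be the main obstacle is justifying that a node never emits the \emph{wrong} symbol at \emph{any} time, not merely at its first boundary crossing. Under the SPRT (latching) reading this is automatic, since a terminated test is frozen at its correct decision. If instead $W_{k,l}$ is allowed to keep evolving after $N_l^i$, one must rule out a later excursion all the way to the opposite threshold $-\gamma_{0,l}$, a displacement of order $2\gamma$ against the drift $\delta_{i,l}$. Here I would reuse the random-walk extremum results underlying Lemma \ref{lemma:ch3:N_l_N_wp1} (\cite[Chapter IV]{GUT_BOOK_2009}): the all-time maximum of a random walk with negative drift is a.s.\ finite with an exponentially decaying tail, so the probability that node $l$ ever travels distance $2\gamma$ in the wrong direction after $N_l^i$ is $O(e^{-c\gamma})$ and vanishes after a union bound over $l$. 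Intersecting $B_\gamma$ with this ``no reverse excursion'' event preserves $P_i(\cdot)\to1$ and gives the strict monotonicity of the transmitting count needed for the exact identification of $t_k$ with the order statistics.

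A final minor point is the possibility of ties, i.e.\ two correct crossing times $N_l^i$ coinciding; since time is discrete this has positive probability, but it only affects how simultaneous transitions are labelled and not the conclusion, so I would dispose of it by an arbitrary tie-breaking convention, or by noting that its probability is asymptotically negligible under the first-passage central limit behaviour in (\ref{eq:ch3:N_l_1_pdf}).
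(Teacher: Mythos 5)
Your proposal is correct and follows essentially the same route as the paper: the paper's proof simply lower-bounds the event by $P_i(N_l^i<N_l^j,\ j\neq i,\ l=1,\dots,L)$, which is your event $B_\gamma$, and invokes Lemma~\ref{lemma:ch3:N_l_N_wp1} together with the finiteness of $L$. Your additional care about post-crossing excursions (resolved by the latching convention) and ties goes beyond what the paper writes but does not change the argument.
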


\begin{proof}
From Lemma \ref{lemma:ch3:N_l_N_wp1}, \\
$P_{i}(\text{Decision at time $t_k$ is $H_i$ and}$\\ \hspace*{0.6 cm} $\text{$t_k$ is the $k^{th}$ order statistics of $N_1^i,N_2^i,\ldots,N_L^i$})$ \\
\hspace*{0.5 cm}$\geq P_{i}(N_l^i<N_l^j, j \neq i, l=1,\ldots,L) \to 1,\, \text{ as } \gamma \to \infty.$
\end{proof}

We use Lemma \ref{lemma:ch3:N_l_N_wp1}-\ref{lemma:ch3:VS_argument_reg_finiteness_st} and equation (\ref{eq:ch3:N_l_1_pdf}) in the following to obtain an approximation for $E_{DD}$ when $\gamma$ and $\beta$ are large. Large $\gamma$ and $\beta$ are needed for small probability of error. Then we can assume that the local nodes are making correct decisions. Although $F_k$ is a random walk before $t_1$, it is not so between  $t_j$ and $t_{j+1}$ for $j=1,\ldots,L$. But we assume that in the following approximation.

Let
\[l^*=min\{j:\delta_{1,FC}^j>0 \text{ and } \frac{\beta-\bar{F_j}}{\delta_{1,FC}^j}< E[t_{j+1}]-E[t_j]\}.\]
\noindent $\bar{F_j}$ can be iteratively calculated as
\begin{equation}
\label{eq:ch3:bar{F_j}}
\bar{F_j}=\bar{F}_{j-1}+\delta_{1,FC}^j\,(E[t_j]-E[t_{j-1}]),\, \bar{F}_0=0.
\end{equation}
Note that $\delta_{1,FC}^j\, (0 \leq j \leq L)$ can be found by assuming $E_1[Y_k]$ as $bj$ and $t_j$ as the $j^{th}$ order statistics of $\{N_l^1,\, 0\leq l \leq L \}$. The Gaussian approximation (\ref{eq:ch3:N_l_1_pdf}) can be used to calculate the expected value of the order statistics using the method given in \cite{Barakat_SMA2004}. This implies that $E[t_j]'s$ and hence $\bar{F_j}s$ are available offline. By using these values $E_{DD}$ ($\approx E_1(N^1)$) can be approximated as,
\begin{equation}
\label{Eddanalysis1}
E_{DD}\approx E[t_{l^*}]+\frac{\beta-\bar{F_{l^*}}}{\delta^{l^*}_{1,FC}},
\end{equation}
\noindent where the first term on R.H.S.\ is the mean time till the mean of increments becomes positive at the fusion node while the second term indicates the mean time for $F_k$ to cross $\beta$ from $t_{l^*}$ onward.
\subsubsection{$P_{MD}/P_{FA}$ Analysis}
\label{subsub:ch3:pfa_analysis}
We provide analysis under $H_1$. $P_{FA}$ analysis is same as that of $P_{MD}$ analysis with obvious changes. When the thresholds at local nodes are reasonably large, according to Lemma \ref{lemma:ch3:VS_argument_reg_finiteness_st}, with a large probability local nodes are making the right decisions and $t_k$ can be taken as the order statistics assuming that all local nodes make the right decisions. Then for missed detection the dominant event is $P_1(N^0<t_1)$. Also for reasonable performance we should select thresholds such that $P_1(N^1<t_1)$ is small. Then
\begin{align}
\label{eq:ch3:pfa_analysis_new_arg}
P_{MD} &=P_{1}(N^0<N^1) \geq P_{1}(N^0<t_1, N^1> t_1)\nonumber\\
& \approx P_{1}(N^0 <t_1).
\end{align} 
Under the above conditions, this lower bound should give a good approximation. In the following, we get an approximation for this.

Let $\xi_k=\log\left[g_{\mu_1}\left(Y_{k}\right)/g_{-\mu_0}\left(Y_{k} \right)\right]$. Then $F_k=\xi_1+\xi_2+...+\xi_k$ and if we assume that $\xi_k$ before $t_1$, has mean zero and has distribution symmetric about zero (e.g., $\sim \mathcal{N}(0,\sigma^2)$) then, 
\begin{IEEEeqnarray*}{rCl}
\IEEEeqnarraymulticol{3}{l}{P_{1}(\text{reject $H_1$ before $t_1$})}\nonumber \\
&\approx &\sum_{k=1}^\infty P_1\Big[\{F_k<-\beta\} \cap_{n=1}^{k-1} \{F_n >-\beta \}\big|t_1>k\Big]P[t_1>k]\\
&=&\sum_{k=1}^\infty\Big(P_1\Big[F_k<-\beta|\cap_{n=1}^{k-1} \{F_n >-\beta\}\Big]\\
&& \qquad \quad P_1\big[\cap_{n=1}^{k-1} \{F_n >-\beta\}\big]\Big) \Big(1-\Phi_{t_1}(k)\Big)\\
&\stackrel{(A)}=&\sum_{k=1}^\infty\Big(P_1[F_k<-\beta|F_{k-1} >-\beta]\\
&& \qquad \quad P_1[\inf_{1\leq n\leq k-1} F_n>-\beta]\Big)\Big(1-\Phi_{t_1}(k)\Big)\\
&\stackrel{(B)}\geq&\ \sum_{k=1}^\infty\Big(\int_{c=0}^{\infty}P_1[\xi_k<-c] f_{F_{k-1}}\{-\beta+c\}\,dc\Big) \\
&& \qquad \quad \Big(1-2P_1[F_{k-1}\leq -\beta]\Big)\Big(1-\Phi_{t_1}(k)\Big),
\end{IEEEeqnarray*}
\noindent where $\Phi_{t_1}$ is the Cumulative Distribution Function of $t_1$. Since we are considering only \{$F_k,k\leq t_1$\}, we remove the dependencies on $t_1$. In the above equations (A) is because of the Markov property of the random walk and (B) is due to the following lemma. This lemma can be obtained from \cite[p.\ 525]{Billingsley_PM_Book}.

\begin{lemma}
If $\xi_1$ has mean zero and distribution symmetric about zero, 
\[\pushQED{\qed} P\Big[\inf_{1\leq n\leq k-1} F_n>-\theta\Big] \geq 1-2P\Big[F_{k-1} \leq -\theta\Big]. \qedhere \popQED\]
\end{lemma}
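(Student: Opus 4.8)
The plan is to recognize this inequality as a form of \emph{L\'evy's maximal inequality} for random walks with symmetric increments, and to establish it via a first-passage / reflection argument. First I would reduce the statement about the minimum of $\{F_n\}$ to one about a maximum. Since $\xi_1$ is symmetric about zero, the reflected walk $\tilde{F}_n = -F_n = \sum_{j=1}^n (-\xi_j)$ has exactly the same law as $\{F_n\}$, and the relevant events transform as $\{\min_{1\le n \le k-1} F_n \le -\theta\} = \{\max_{1\le n \le k-1} \tilde{F}_n \ge \theta\}$ together with $\{F_{k-1} \le -\theta\} = \{\tilde{F}_{k-1} \ge \theta\}$. Hence it suffices to prove the one-sided bound $P[\max_{1\le n\le m} S_n \ge \theta] \le 2\, P[S_m \ge \theta]$ for a symmetric random walk $S_n$ with $m = k-1$, and then to take the complement of the minimum event at the end.

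The core step is the reflection argument. I would introduce the first-passage time $\tau = \inf\{n \ge 1 : S_n \ge \theta\}$ and decompose the maximal event $\{\max_{1\le n\le m} S_n \ge \theta\} = \{\tau \le m\} = \bigcup_{j=1}^m \{\tau = j\}$ into its disjoint pieces. On $\{\tau = j\}$ we have $S_j \ge \theta$, and the event $\{\tau = j\}$ is measurable with respect to $\sigma(\xi_1,\ldots,\xi_j)$, hence independent of the future increment sum $S_m - S_j$. By symmetry $S_m - S_j$ is itself symmetric about zero, so $P[S_m - S_j \ge 0] \ge \tfrac{1}{2}$. Combining these, on $\{\tau = j\}\cap\{S_m - S_j \ge 0\}$ we have $S_m \ge S_j \ge \theta$, whence
\[
P[S_m \ge \theta,\ \tau = j] \ge P[\tau=j]\,P[S_m - S_j \ge 0] \ge \tfrac{1}{2}\, P[\tau = j].
\]
Summing over $j=1,\ldots,m$ and using disjointness gives $P[S_m \ge \theta] \ge \tfrac{1}{2}\, P[\tau \le m] = \tfrac{1}{2}\, P[\max_{1\le n\le m} S_n \ge \theta]$, which is precisely the required one-sided inequality.

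Finally I would reassemble the pieces: translating back to $\{F_n\}$ yields $P[\min_{1\le n\le k-1} F_n \le -\theta] \le 2\, P[F_{k-1}\le -\theta]$, and taking the complement of the minimum event produces the stated bound $P[\min_{1\le n\le k-1} F_n > -\theta] \ge 1 - 2\,P[F_{k-1}\le -\theta]$.

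The routine bookkeeping here is harmless, so the only place that needs genuine care is the independence/symmetry step: one must verify that $\{\tau = j\}$ depends only on $\xi_1,\ldots,\xi_j$ and is therefore independent of $S_m - S_j$, and that symmetry alone delivers $P[S_m - S_j \ge 0]\ge \tfrac{1}{2}$ (with equality unless the increment distribution places an atom at the relevant point). This is the main conceptual obstacle; everything else is the standard decomposition. I would also note that the hypothesis of mean zero is in fact subsumed by symmetry about zero and is not used separately.
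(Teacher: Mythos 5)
Your proof is correct and matches the paper's route: the paper disposes of this lemma by citing the symmetric-random-walk maximal inequality in Billingsley (p.~525), which is exactly the L\'evy reflection bound $P[\max_{1\le n\le m} S_n \ge \theta] \le 2P[S_m \ge \theta]$ that you reprove via the first-passage decomposition and the bound $P[S_m - S_j \ge 0] \ge \tfrac{1}{2}$. Your observation that the mean-zero hypothesis is not separately needed is also consistent with the cited result, which assumes only symmetry of the increments.
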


Similarly we can write an upper bound by replacing $P[\cap_{n=1}^{k-1}\{F_n>-\theta\}]$ with $P[F_{k-1}>-\theta$]. We can make the lower bound tighter if we do the same analysis for the random walk between ${t_1}$ and $t_2$ with appropriate changes and add to the above bounds.
\subsubsection{Example 1}
\label{subsub:ch3:eg1_analysis}
We apply the DualSPRT on the following example and compare the $E_{DD}$ and $P_{FA}$ via analysis provided above with the simulation results. We assume that ${f_0}$ and ${f_1}$ are Gaussian with different means. This model is relevant when the noise and interference are log-normally distributed (\cite{Unnikrishnan_JSTSP2008}), and when $X_{k,l}$ is the sum of energy of a large number of observations at the secondary nodes at a low SNR.

Parameters used for simulation are as follows: $L=5$, $f_0\sim \mathcal{N}(0,1)$ and $f_1\sim\mathcal{N}(1,1)$. Also $f_0=f_{0,l}$ and $f_1=f_{1,l}$ for $1\leq l \leq L$, and $b=1$. We plot $P_E$ (=$P_{FA}$ under $H_0$ and $P_{MD}$ under $H_1$) and $E_{DD}$ (=$E_1[N]$ or $E_0[N]$) versus $\beta$ in Figure \ref{fig_dualSPRT_perf-comp_same-SNR}. Here $\gamma$, $\mu$ and $b$ are fixed for ease of calculation and they are chosen to provide good performance for the given $P_{MD}/P_{FA}$. The figure also contains the results obtained via analysis. We see a good match in theory and simulations. For comparison, Figure \ref{fig_dualSPRT_perf-comp_same-SNR} also contains asymptotic results which are presented in Section \ref{sec:DualSPRT_asy_opt} below.
\begin{figure}[h]
\vspace{-0.2 cm}
\subfloat[Probability of error]{\hspace{-0.3 cm}\includegraphics[trim = 16mm 0 19mm 5mm, clip=true,scale=0.29]{./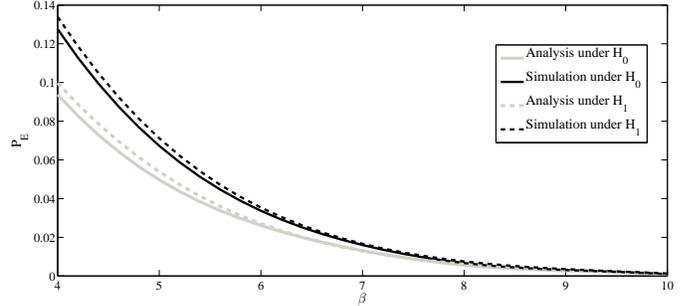}} \\
\subfloat[Expected detection delay]{\includegraphics[trim = 16mm 0 19mm 5mm, clip=true,scale=0.34]{./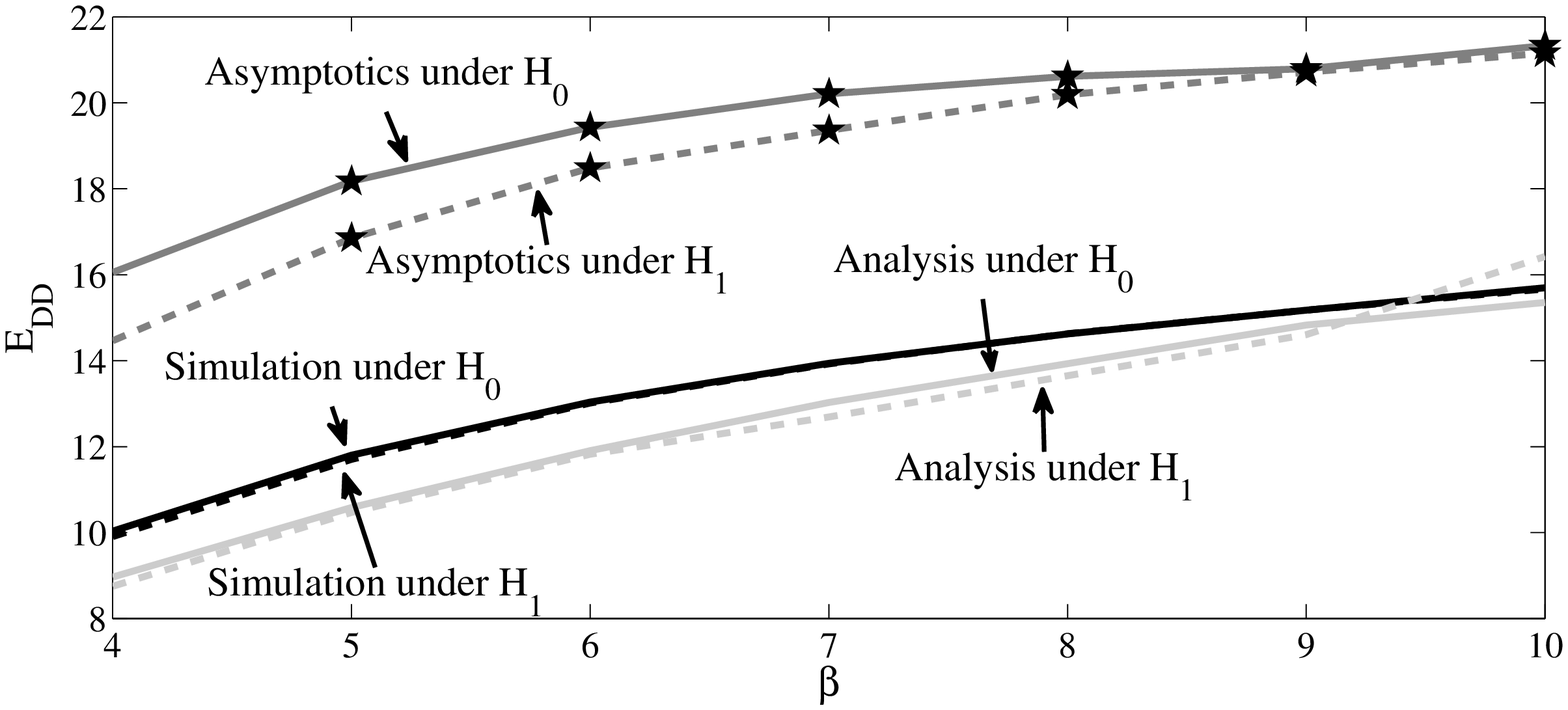}\label{fig_dualSPRT_perf-comp_same-SNR-2}}
\vspace{-0.2 cm} \caption{DualSPRT-Comparison between theory and simulation} 
\vspace{-0.1 cm}
\label{fig_dualSPRT_perf-comp_same-SNR}
\end{figure}

The above example is for the case when $X_{k,l}$ have the same distribution for different $l$ under the hypothesis $H_0$ and $H_1$. However in practice the $X_{k,l}$ for different local nodes $l$ will often be different because their receiver noise can have different variances and/or the path losses from the primary transmitter to the secondary nodes can be different. An example is provided here to illustrate the application of the above analysis to such a scenario. Now the order statistics $t_{l^*}$ in \eqref{Eddanalysis1} needs to be appropriately computed.
\subsubsection{Example 2}
\label{subsub:ch3:eg2_analysis}
There are five secondary nodes with primary to secondary channel gain being $0, -1.5, -2.5, -4$ and $-6$ dB respectively (corresponding post change means are $1, 0.84, 0.75, 0.63, 0.5$). $f_0\sim \mathcal{N}(0,1),f_0=f_{0,l}$ for $1\leq l \leq L$. Figure \ref{fig_dualSPRT_perf-comp_diff-SNR} provides the $E_{DD}$ and $P_{FA}$ via analysis and simulations. We see a good match.
\begin{figure}[h]
\vspace{-0.2 cm}
\subfloat[Probability of false alarm]{\hspace{-0.4 cm}\includegraphics[trim = 16mm 0 19mm 5mm, clip=true,scale=0.29]{./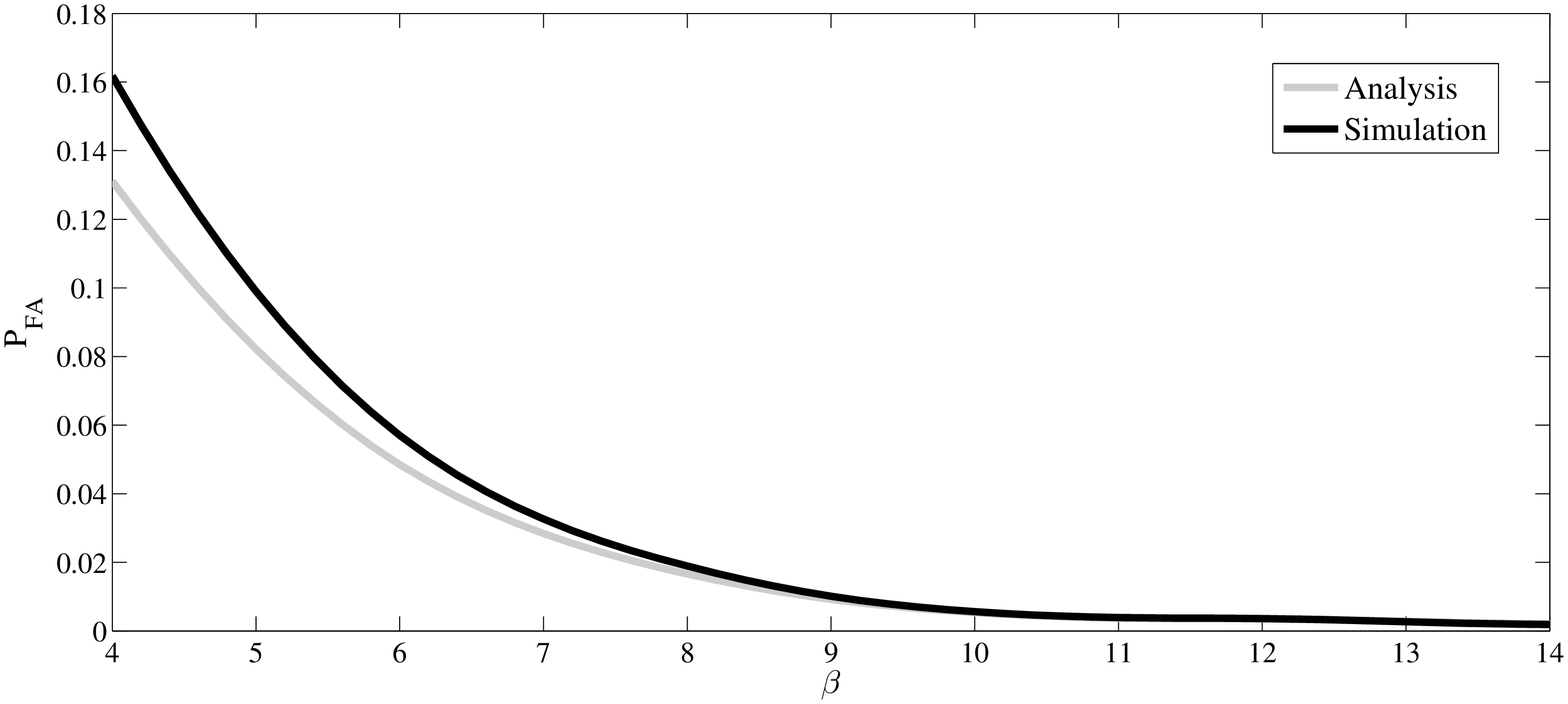}} \\
\subfloat[Expected detection delay]{\hspace{-0.4 cm}\includegraphics[trim = 16mm 0 19mm 5mm, clip=true,scale=0.29]{./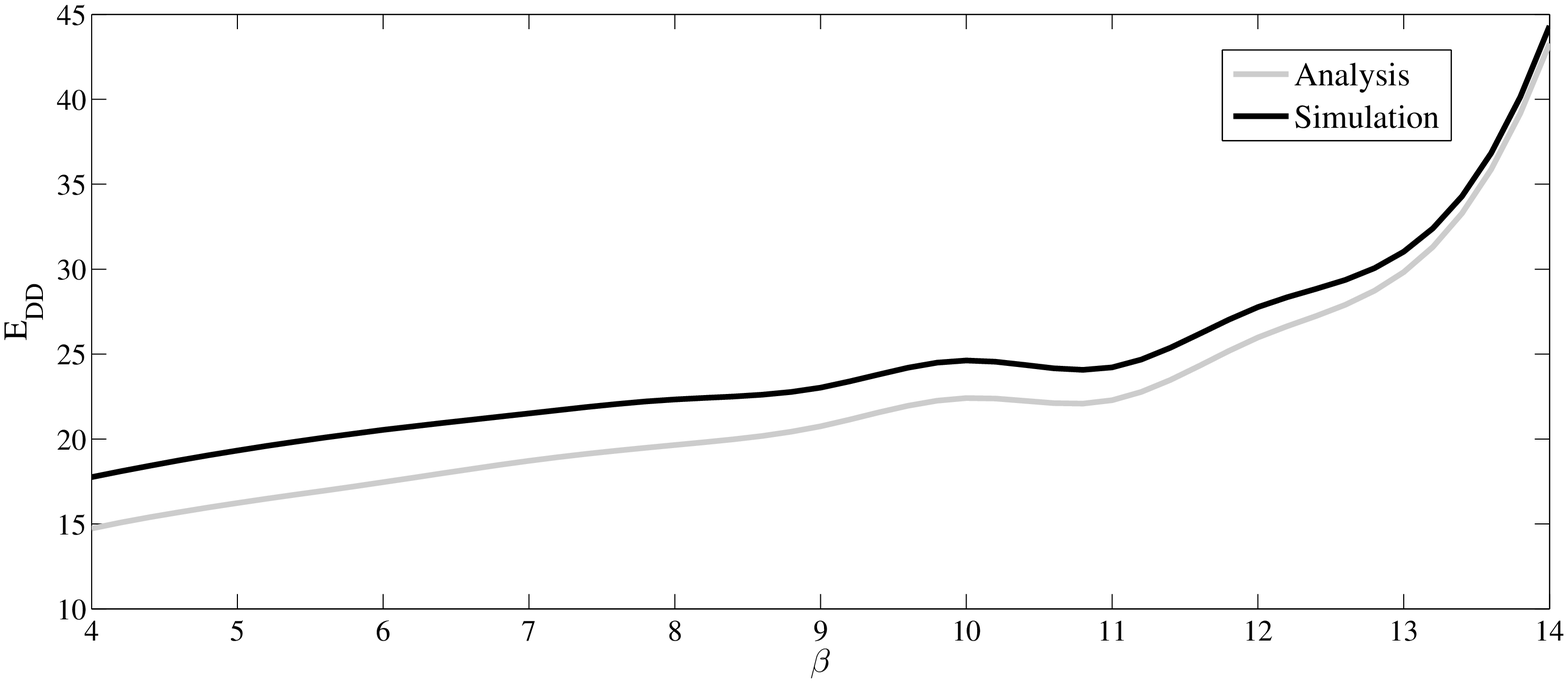}}
\vspace{-0.2 cm} \caption{DualSPRT-Comparison between theory and simulation} 
\vspace{-0.1 cm}
\label{fig_dualSPRT_perf-comp_diff-SNR}
\end{figure}

Unlike normal SPRT for i.i.d.\ observations where bounds are available on error probabilities based on the thresholds, here, in DualSPRT, the desired error probabilities are complicated functions of $\gamma_{0,l}$, $\gamma_{1,l}$, $\beta_0$, $\beta_1$, $b_0$, $b_1$, $\mu_1$ and $\mu_0$. From the analysis provided in this section, it is possible to provide beforehand, atleast approximately, the set of values for thresholds to achieve desired error probabilities and these can be used to design the test.
\section{Asymptotic Properties of DualSPRT}
\label{sec:DualSPRT_asy_opt}
In this section we prove asymptotic properties of DualSPRT.

We use the following notation:
\[D_{tot}^0=\sum_{l=1}^{L} D(f_{0,l}||f_{1,l}),\, D_{tot}^1=\sum_{l=1}^{L} D(f_{1,l}||f_{0,l}),\] \[r_l=\frac{D(f_{0,l}||f_{1,l})}{D_{tot}^0},\, \rho_l=\frac{D(f_{1,l}||f_{0,l})}{D_{tot}^1}.\]

Let $\mathcal{A}^i$ be the event that all the secondary users transmit $b_i$ when the true hypothesis is $H_i$. Also let $\Delta(\mathcal{A}^i)$ be the mean of increments of $F_k$ when $\mathcal{A}^i$ happens, i.e., $\Delta(\mathcal{A}^i)=E_i \left[(\log \frac{g_{\mu_1}(Y_k)}{g_{-\mu_0}(Y_k)})|\mathcal{A}^i\right]$. We use $\theta_i$ as the mean of the increments of $F_k$ when all the local nodes transmit wrong decisions under $H_i$.

In the rest of this section, local node thresholds are $\gamma_{0,l}=-r_l|\log c|, \gamma_{1,l}=\rho_l|\log c|$ and fusion center thresholds are $\beta_0=-{|\log c|}, \beta_1={|\log c|}$.

We will also need,
{\allowdisplaybreaks
\begin{eqnarray}
\label{eq:ch3:proof_notn_tau}
\tau_l(c)&\stackrel{\Delta}=&\sup \left\{n\geq1:W_{n,l} \geq  -r_l|\log c|\right\}, \nonumber \\
\tau(c)&\stackrel{\Delta}=&\max_{1 \leq l \leq L} \tau_l (c).
\end{eqnarray}}

Let $F_k^*$ be another likelihood ratio sequence (at FC) with expected value of its components as $\theta_i$ under $H_i$, the worst case value of the mean of the increments of $F_k$. Let the increments of $F_k^*$ be $\xi_1^*,\ldots,\xi_k^*$ which are i.i.d. Under $H_0$, $\theta_0 >0$ and under $H_1$, $\theta_1<0$. 

\begin{theorem}
\label{thm:ch3:edd}
For all  $l$ and for some $\alpha >1$, let $E_i[{|\xi_1^*|}^{\alpha+1}]<\infty$ and $E_i \left[{\Big|\log \frac{f_{1,l}(X_{1,l})}{f_{0,l}(X_{1,l})}\Big|}^{\alpha+1}\right]< \allowbreak \infty$, $i=0,1$. Then, under $H_i$,
\begin{equation*}
\overline{\lim_{c \to 0}} \frac{N}{|\log c|} \leq \frac{1}{D_{tot}^i}+M_i \text{ a.s.\ and in $L_1$,} 
\end{equation*}
where $M_i=C_i/\Delta(\mathcal{A}^i)$, $C_0=-\left[1+\frac{E_0[|\xi_1^*|]}{D_{tot}^0} \right]$ and $C_1=\left[1+\frac{E_1[|\xi_1^*|]}{D_{tot}^1} \right]$.
\end{theorem}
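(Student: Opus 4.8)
The plan is to prove the bound under $H_0$; the $H_1$ case follows by an entirely symmetric argument in which the roles of the upper and lower thresholds, of $D_{tot}^0$ and $D_{tot}^1$, and all the relevant signs are interchanged (in particular $\tau_l(c)$ is replaced by the last exit time \emph{below} the upper threshold $\gamma_{1,l}=\rho_l|\log c|$). Since $N=\min(N^0,N^1)\le N^0$, it suffices to bound $\limsup_{c\to0} N^0/|\log c|$. The key structural observation is that for every $k>\tau(c)$ each local walk satisfies $W_{k,l}<-\gamma_{0,l}$, so \emph{all} $L$ nodes transmit $b_0$ and the event $\mathcal{A}^0$ holds; hence, after time $\tau(c)$, the increments of $F_k$ are i.i.d.\ with (negative) mean $\Delta(\mathcal{A}^0)$. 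I would therefore split the stopping time as $N^0\le \tau(c)+ N'$, where $N'$ is the additional time for the post-$\tau(c)$ random walk, started from $F_{\tau(c)}$, to fall below $-\beta_0$.

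First I would establish the asymptotics of $\tau(c)$. For fixed $l$, $\{W_{n,l}\}$ is under $H_0$ a random walk with negative drift $\delta_{0,l}=-D(f_{0,l}\|f_{1,l})$, and $\tau_l(c)$ is its last exit time above the level $-r_l|\log c|$. By the last-exit (last passage) results for random walks with negative drift in \cite{GUT_BOOK_2009}, $\tau_l(c)$ is a.s.\ finite and $\tau_l(c)/|\log c| \to r_l/D(f_{0,l}\|f_{1,l})$. Because the thresholds were chosen as $\gamma_{0,l}=r_l|\log c|$ with $r_l=D(f_{0,l}\|f_{1,l})/D_{tot}^0$, this limit equals $1/D_{tot}^0$ for \emph{every} $l$ simultaneously, so
\[\frac{\tau(c)}{|\log c|}=\max_{1\le l\le L}\frac{\tau_l(c)}{|\log c|}\longrightarrow \frac{1}{D_{tot}^0}\quad\text{a.s.,}\]
and, under the assumed $(\alpha+1)$-st moment condition on the local log-likelihood ratios, also in $L_1$ (the moment bound yields finite moments of the last-exit time and the required uniform integrability).

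Next I would control $F_{\tau(c)}$ and $N'$. The per-step increment of $F_k$ before $\tau(c)$ is dominated by the worst-case increment $|\xi_k^*|$ (the increment produced when all nodes transmit the wrong symbol), so $F_{\tau(c)}\le \widehat{F}^*_{\tau(c)}=\sum_{k=1}^{\tau(c)}|\xi_k^*|$. Since the $|\xi_k^*|$ are i.i.d.\ with mean $E_0[|\xi_1^*|]$ and $\tau(c)\to\infty$ a.s., a randomly-indexed strong law (of Anscombe type) gives $\widehat{F}^*_{\tau(c)}/\tau(c)\to E_0[|\xi_1^*|]$, whence $\widehat{F}^*_{\tau(c)}/|\log c|\to E_0[|\xi_1^*|]/D_{tot}^0$. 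For the residual time, the post-$\tau(c)$ walk must descend from $F_{\tau(c)}$ to $-\beta_0$, so up to an overshoot $N'\le (F_{\tau(c)}+\beta_0)/|\Delta(\mathcal{A}^0)|$, and the SLLN for that walk makes the estimate asymptotically tight. Combining,
\[\frac{N^0}{|\log c|}\le \frac{\tau(c)}{|\log c|}+\frac{1}{|\Delta(\mathcal{A}^0)|}\left(\frac{\widehat{F}^*_{\tau(c)}}{|\log c|}+\frac{\beta_0}{|\log c|}\right),\]
and taking $\limsup_{c\to0}$ (with $\beta_0=|\log c|$) yields the limit $\frac{1}{D_{tot}^0}+\frac{1}{|\Delta(\mathcal{A}^0)|}\big(\frac{E_0[|\xi_1^*|]}{D_{tot}^0}+1\big)=\frac{1}{D_{tot}^0}+M_0$, which is the claim.

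The main obstacle is the handling of the \emph{random}, data-dependent index $\tau(c)$ together with the promotion of a.s.\ convergence to $L_1$. Two points require care: justifying the pathwise domination $F_{\tau(c)}\le\widehat{F}^*_{\tau(c)}$, so that the non-i.i.d.\ pre-$\tau(c)$ increments may be replaced by an i.i.d.\ worst-case walk, and applying the strong law at the index $\tau(c)$ while simultaneously controlling the overshoot of the post-$\tau(c)$ walk. The $(\alpha+1)$-st moment hypotheses on $\xi_1^*$ and on the local log-likelihood ratios, with $\alpha>1$, are exactly what guarantees the finiteness of the moments of $\tau(c)$ and of the overshoots, and hence the uniform integrability that upgrades all the a.s.\ limits above to convergence in $L_1$.
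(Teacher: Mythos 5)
Your proposal is correct and follows essentially the same route as the paper's own proof: the decomposition $N\le N^0\le \tau(c)+(\text{descent time of the post-}\tau(c)\text{ walk})$, the last-exit asymptotics $\tau(c)/|\log c|\to 1/D_{tot}^0$ with uniform integrability, the coupling so that the worst-case i.i.d.\ walk $\widehat{F}^*_k=\sum_{j\le k}|\xi_j^*|$ dominates $F_k$ pathwise, and the first-passage strong law at the random level, which together produce exactly the terms $-1/\Delta(\mathcal{A}^0)$ and $-E_0[|\xi_1^*|]/(\Delta(\mathcal{A}^0)D_{tot}^0)$ constituting $M_0$. The only differences are presentational: the paper makes your ``up to an overshoot'' step rigorous by splitting the descent time as $\nu(-|\log c|-F_{\tau(c)+1})\le \nu(-|\log c|)+\nu(-\widehat{F}^*_{\tau(c)+1})$ and taking limits of the ratio $\big(\nu(-\widehat{F}^*_{\tau(c)+1})/\widehat{F}^*_{\tau(c)+1}\big)\big(\widehat{F}^*_{\tau(c)+1}/|\log c|\big)$, and it carries out in detail the $L_1$ upgrade you invoke, via uniform integrability of $\tau_l(c)/|\log c|$ and a conditioning argument using $E_0[\nu(-x)^{\alpha}]/x^{\alpha}\to(-1/\Delta(\mathcal{A}^0))^{\alpha}$ under the stated $(\alpha+1)$-moment hypotheses.
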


\begin{IEEEproof}
See Appendix \ref{proof:thm:ch3:edd}.
\end{IEEEproof}
\vspace{0.3 cm}

Figure \ref{fig_dualSPRT_perf-comp_same-SNR-2} compares the asymptotic upper bounds of $E_{DD}$ in Theorem \ref{thm:ch3:edd} with the approximations provided in Section \ref{sec:DualSPRT_per_ana} and simulations. We see that the approximate analysis of Section \ref{sec:DualSPRT_per_ana} provides much better approximation at threshold values of practical interest in Cognitive Radio. Perhaps this is the reason, the asymptotically optimal schemes do not necessarily provide very good performance at operating points of practical interest.

Next we consider the asymptotics of $P_{FA}$ and $P_{MD}$. Let $\displaystyle R_i=\min_{1\leq l \leq L} \Big(-\log \inf_{t \geq 0} E_i\Big[\exp \left({-t \log \frac{f_{1,l(X_{1,l})}}{f_{0,l}(X_{1,l})}}\right)\Big] \Big)$. 

Let $G_i$ and $\widehat{G}_i$ be the distributions of $|\xi_1^*|$ and $\xi_1^*$ respectively. Also let $g_i$ and $\widehat{g}_i$ be the corresponding moment generating functions. Let $\Lambda_i(\alpha)=\sup_{\lambda}(\alpha \lambda-\log g_i(\lambda))$, $\widehat{\Lambda}_i(\alpha)=\sup_{\lambda}(\alpha \lambda-\log \widehat{g}_i(\lambda))$ and take $\alpha^{+}_i=\text{ess }\sup |\xi_1^*|$. Let
\begin{equation}
\label{eq:thm_pe_s-param}
s_i(\eta)=\left\{
\begin{array}{ll}
\frac{\eta}{\alpha_i^{+}} &, \text{ if } \eta \geq \Lambda_i(\alpha^{+}_i),\\
\frac{\eta}{\Lambda^{-1}_i(\eta)} &, \text{ if } \eta \in (0, \Lambda_i(\alpha^{+}_i)).
\end{array}
\right.
\end{equation}
\begin{theorem}
\label{thm:ch3:pe-bayes}
Let $g_i(\lambda)<\infty$ in a neighbourhood of zero. Then,
\begin{enumerate}
\item[(a)]\label{thm:ch3:peb} $\displaystyle \lim_{c \to 0}\, \frac{P_{FA}}{c}=0$ if for some $0<\eta<R_0$, $s_0(\eta) >1$.
\item[(b)]\label{thm:ch3:pec} $\displaystyle \lim_{c \to 0}\, \frac{P_{MD}}{c}=0$ if for some $0<\eta<R_1$, $s_1(\eta) >1$. 
\end{enumerate}
\end{theorem}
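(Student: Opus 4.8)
The theorem asserts that $P_{FA}/c \to 0$ (and symmetrically $P_{MD}/c \to 0$) under the stated large-deviations conditions on $\eta$ and $s_i(\eta)$. The plan is to bound $P_{FA} = P_0(N^1 < N^0)$ by the probability that the fusion statistic $F_k$ crosses the upper threshold $\beta_1 = |\log c|$ under $H_0$, before it is driven down by the (eventually negative-drift) dynamics. The natural decomposition is to split the trajectory of $F_k$ at the random time $\tau(c)$ defined in \eqref{eq:ch3:proof_notn_tau}: after $\tau(c)$ all local nodes have (with high probability) settled into transmitting the correct decision $b_0$, so the increments of $F_k$ have the favourable worst-case mean $\theta_0 > 0$ pushing $F_k$ toward $-\beta_0$, and only before $\tau(c)$ can a false crossing of $\beta_1$ plausibly occur. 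Thus I would write
\begin{equation*}
P_{FA} \;\le\; P_0\big(\text{$F_k \ge \beta_1$ for some $k \le \tau(c)$}\big) \;+\; P_0\big(\text{crossing after $\tau(c)$}\big),
\end{equation*}
and argue the second term is negligible because after $\tau(c)$ the walk has strictly negative pull away from $\beta_1$.

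**Controlling each term.**

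For the pre-$\tau(c)$ term, I would condition on the value $\tau(c) = n$ and use a union/Boole bound over $k \le n$ together with a Chernoff (exponential Markov) bound on the partial sums $F_n^* = \sum_{k=1}^n \xi_k^*$ and $\widehat F_n^* = \sum_{k=1}^n |\xi_k^*|$, whose large-deviations rates are governed precisely by $g_i$, $\widehat g_i$ and their Legendre transforms $\Lambda_i$, $\widehat\Lambda_i$. The crossing event $\{F_k \ge \beta_1\}$ forces the worst-case walk to accumulate at least $\beta_1 = |\log c|$, and the hypothesis $g_i(\lambda) < \infty$ near zero is exactly what legitimizes the Chernoff bound. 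The tail of $\tau(c)$ itself is the bridge to the parameter $R_0$: since $\tau_l(c)$ is the last time a negative-drift random walk exceeds $-r_l|\log c|$, its tail decays at exponential rate determined by $-\log \inf_{t\ge0} E_0[\exp(-t\,\text{LLR})]$, i.e. by $R_0$. Combining the crossing-probability estimate (controlled by $s_0(\eta)$) with the $\tau(c)$-tail estimate (controlled by $R_0$) and optimizing, I would show the dominant exponential rate of $P_{FA}$ strictly exceeds that of $c$ whenever there exists $\eta \in (0,R_0)$ with $s_0(\eta) > 1$; the definition \eqref{eq:thm_pe_s-param} of $s_i(\eta)$ is engineered so that $s_0(\eta)>1$ is exactly the statement that the crossing rate per unit of accumulated time beats the rate at which $c$ decays.

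**The main obstacle.**

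The hard part will be handling the interaction between the two exponential scales cleanly: both $\beta_1$ and the local thresholds $\gamma_{\cdot,l} = \rho_l|\log c|$ or $r_l|\log c|$ scale like $|\log c|$, so the crossing probability and the time $\tau(c)$ are coupled through the same small parameter $c$, and I cannot treat them as independent. The delicate step is making the union bound over $k \le \tau(c)$ uniform, since $\tau(c)$ is itself random and growing like $|\log c|$; I would need the summed Chernoff bounds $\sum_k e^{-k\,\text{(rate)}}$ to converge and to be dominated by a geometric-type series whose total rate, after accounting for the $|\log c|$ range of $k$, still delivers $o(c)$. The case split in \eqref{eq:thm_pe_s-param} at $\eta = \Lambda_i(\alpha_i^+)$ reflects whether the optimal tilting parameter is interior or pinned at the essential supremum $\alpha_i^+$, and I expect getting the boundary case and the linear regime $\eta \ge \Lambda_i(\alpha_i^+)$ to require separate careful verification that $s_i(\eta) > 1$ still forces a strictly-faster-than-$c$ decay. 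I would finish by invoking, for the symmetric claim (b), the identical argument under $H_1$ with the roles of $\theta_0,\theta_1$ and the corresponding signs reversed.
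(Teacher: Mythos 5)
Your decomposition of $P_{FA}$ at the random time $\tau(c)$ matches the paper's, and you correctly identify $R_0$ as entering through the exponential tail of $\tau(c)$; but both halves of your plan have genuine gaps. For the pre-$\tau(c)$ term, the union-bound-plus-Chernoff route you sketch is exactly where the argument breaks: conditioning on $\tau(c)=n$ does not decouple anything, since $\{\tau(c)=n\}$ is determined by the local nodes' trajectories and hence (through the transmissions $Y_k$) correlated with the increments of $F_k$, and the range of $k$ grows on the same $|\log c|$ scale as the threshold --- you flag this coupling yourself as ``the main obstacle'' but do not resolve it. The paper sidesteps it with a step you are missing: dominate $F_k$ by the worst-case i.i.d.\ walk $F_k^*$ and then bound the running maximum by the monotone sum of absolute increments, $\sup_{0\le k\le\tau(c)}F_k^* \le \widehat{F}^*_{\tau(c)}=\sum_{k\le\tau(c)}|\xi_k^*|$, so that only the tail of a randomly-indexed sum is needed. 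That tail is then supplied by an off-the-shelf theorem of Borovkov for sums of a random number of i.i.d.\ variables whose index has an exponential moment --- $E_0[e^{\eta\,\tau(c)}]<\infty$ for $0<\eta<R_0$, obtained from Iksanov's last-passage-time result together with $\tau(c)\le\sum_{l}\tau_l(c)$ and independence across nodes. The function $s_0(\eta)$ in \eqref{eq:thm_pe_s-param}, with its two regimes split at $\Lambda_0(\alpha_0^+)$, is not ``engineered'' by the authors but is precisely the exponent delivered by Borovkov's theorem; re-deriving it by hand over a random index range, as you propose, is a substantial large-deviations argument that your proposal does not carry out.

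The post-$\tau(c)$ term is likewise not ``negligible because the drift is negative,'' as you assert (note also a sign slip: $\theta_0>0$ is the \emph{worst-case} drift pushing $F_k$ up toward $\beta_1$, while the favourable post-$\tau(c)$ drift is $\Delta(\mathcal{A}^0)<0$). The difficulty is that the walk restarts at the random level $F_{\tau(c)+1}$, which may already be close to $|\log c|$, and from there even a negative-drift walk crosses the threshold with a probability that is not automatically $o(c)$. The paper must split on $\{F_{\tau(c)+1}\le r|\log c|\}$ for a tuning parameter $0<r<1$: on that event it applies the exponential-martingale (Wald) crossing bound, giving $c^{(1-r)s'}$ with $s'$ the positive root of $E_0\big[e^{s'\log\frac{g_{\mu_1}(Y_k)}{g_{-\mu_0}(Y_k)}}\,\big|\,\mathcal{A}^0\big]=1$, and on the complement it reuses the stopped-sum tail bound to get $P_0[F_{\tau(c)+1}>r|\log c|]\le k'\,c^{\,r\,s_0(\eta)}$, finally choosing $r$ with $1/s_0(\eta)<r\le 1-1/s'$ so that both exponents strictly exceed $1$. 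None of this machinery appears in your proposal, and without it the claimed decay of the second term does not follow.
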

\begin{IEEEproof}
See Appendix \ref{proof:thm_pe}.
\end{IEEEproof}
\begin{Remarks}
\label{rem:asy_opt_DualSPRT_rem-contraction-pri}
When $\alpha_i^{+}=\infty$ which is generally true, $\Lambda_i(\alpha^{+}_i)=\infty$ (\cite{Borovkov_1995}) and in Theorem \ref{thm:ch3:pe-bayes}(a) and \ref{thm:ch3:pe-bayes}(b) we need to consider only $R_i<\Lambda_i(\alpha^{+}_i)$. Also $\Lambda_i$ can be computed from $\widehat{\Lambda}_i$ using Contraction principle in Large Deviation theory (\cite{Dembo_LDP}).
\end{Remarks}

\begin{Remarks}
In \cite[Lemma 1-Appendix A]{Banerjee_WCOM}, it is proved that log likelhood ratio converts a large class of distributions into light tailed distributions and then $g_i(\lambda)$ is finite in a neighbourhood of zero. For instance, consider a regularly varying distribution for $Z_k$, $P(Z_k>x)=l'(x)x^{-\alpha}$, where $l'(x)$ is a slowly varying function and $\alpha>0$. Then,
$\log g_{\mu_1}(x)/g_{-\mu_0}(x)=\Big(l(x-\mu_1)/l(x+\mu_0)\Big) (x-\mu_1)^{-\alpha}(x+\mu_0)^{\alpha}$ $\leq x^{\beta_1+\alpha*\beta_2}$ for large $x$, any $\beta_1>0$ and an appropriately chosen $\beta_2>0$. This proves the conditions for \cite[Lemma 1]{Banerjee_WCOM} and hence exponential tail for $\widehat{G}_{i}(t)$ follows.
\end{Remarks}
We compare the asymptotic results obtained in Theorems \ref{thm:ch3:edd} and \ref{thm:ch3:pe-bayes} with that of SPRT with all the data available at the local nodes centrally without noise. Let $N_{ct}$ be the stopping time of such an SPRT. Then, from \cite[Theorem 2.11.1 and 2.11.2]{Govindarajulu_SS},
\begin{equation}
\label{eq:asy_opt_arg_edd}
\lim_{c \to 0} \frac{E_i[N_{ct}]}{|\log c|}=\frac{1}{D_{tot}^i}
\end{equation}
\begin{equation}
\label{eq:asy_opt_arg_pe}
\lim_{c \to 0} \frac{\log 1/P_{FA}}{|\log c|} \to 1, \lim_{c \to 0} \frac{\log 1/P_{MD}}{|\log c|} \to 1.
\end{equation}
Theorem \ref{thm:ch3:pe-bayes} implies the asymptotics \eqref{eq:asy_opt_arg_pe} on $P_{FA}$ and $P_{MD}$ for DualSPRT. Comparing Theorem \ref{thm:ch3:edd} with \eqref{eq:asy_opt_arg_edd}, we see that the rates of convergence of DualSPRT are optimal. For the limits to equal, we need $M_0$ and $M_1$ to be zero. In Section \ref{subsec:asy_prop_ex_gau} we compute $M_0$ and $M_1$ for Gaussian fusion center noise.

We can consider the asymptotic performance in the Bayesian framework also. Then the two hypotheses $H_0$ and $H_1$ are assumed to have known prior probabilities $\pi$ and $1-\pi$ respectively. A cost $c$ $(\geq 0)$ is assigned to each time step taken for decision. Let $W_i >0, i=0,1$ be the cost of falsely rejecting $H_i$. Then Bayes risk of a test $\delta$ with stopping time $N$ is defined as,
\begin{multline}
\label{eq:ch3:bayes_risk}
\mathcal{R}_c (\delta) =\pi [c {E}_0 (N)+W_0 {P}_0 \{reject~H_0\}]\\
+(1-\pi)[c {E}_1(N)+W_1 {P}_1\{reject H_1\}].
\end{multline}
Optimising \eqref{eq:ch3:bayes_risk} makes sense even when one does not have prior $\pi$ (i.e., within the frequentist framework) because then taking $\pi$ and $W_i$ appropriately, one can think of selecting a decision rule that asymptotically minimizes a weighted sum of $E_i[N]$ and $P_i[\text{reject }H_i], i=1,2$. 

Let $\mathcal{R}_c(\delta_{cent.})$ and $\mathcal{R}_c(\delta_{DualSPRT})$ be the Bayes's Risk of the optimal centralized SPRT without considering fusion center noise and of DualSPRT respectively. Then, (\cite[p.~2076]{Mei_TIT2008}),
\[\lim_{c\to 0}\frac{\mathcal{R}_c(\delta_{cent.})}{c |\log c|} =\left(\frac{\pi}{D_{tot}^0}+\frac{1-\pi}{D_{tot}^1} \right).\]
\noindent From Theorem \ref{thm:ch3:edd} and Theorem \ref{thm:ch3:pe-bayes}(a) and \ref{thm:ch3:pe-bayes}(b), using (\ref{eq:ch3:bayes_risk}), for DualSPRT with fusion center noise,
\[\lim_{c\to 0}\frac{\mathcal{R}_c(\delta_{DualSPRT})}{{c |\log c|}} = \left(\frac{\pi}{D_{tot}^0}+\frac{1-\pi}{D_{tot}^1}+ C\right),\]
where $C=M_0 \pi +M_1 (1-\pi)$. The constant $C$ can be made arbitrarily small by making $M_0$ and $M_1$ small.

\subsection{Example-Gaussian distribution}
\label{subsec:asy_prop_ex_gau}
In the following we apply Theorems \ref{thm:ch3:edd} and \ref{thm:ch3:pe-bayes} when the fusion center noise is Gaussian $\mathcal{N}(0,\sigma^2_{FC})$. We take $\mu_1=\mu_0=\mu>0$ and $b_1=-b_0=b>0$. For Theorem \ref{thm:ch3:edd}, $\Delta(\mathcal{A}^0)=-2\mu Lb/ \sigma^2_{FC}$ and $\Delta(\mathcal{A}^1)=2\mu Lb/ \sigma^2_{FC}$. Therefore $M_0$ and $M_1$ in Theorem \ref{thm:ch3:edd} $\to 0$ if $L \to \infty$ and/or $b \to \infty$. This also happens if $\sigma^2_{FC} \to 0$.

Using Remark \ref{rem:asy_opt_DualSPRT_rem-contraction-pri}, the condition in Theorem \ref{thm:ch3:pe-bayes}(a) is $\sigma^2_{FC}\eta/(4\mu^2\sqrt{2\eta}-2\mu L b)> 1$ for some $0<\eta<R_0$ and that for Theorem \ref{thm:ch3:pe-bayes}(b) is $\sigma^2_{FC}\eta/(4\mu^2\sqrt{2\eta}+2\mu L b)> 1$ for some $0<\eta<R_1$. Combining these two, it is sufficient to satisfy later condition with $0<\eta<\min(R_0,R_1)$. For Gaussian input observations at the local nodes, assuming $f_{1,l}=f_1$, $f_{0,l}=f_0$ for $1 \leq l \leq L$, we get $\delta_{i,l}=\delta_i$ and $\rho_{i,l}=\rho_i$, $\displaystyle  R_i=\frac{\delta_{i}^2}{2 \rho_{i}^2}$. This specifies upper-bounds for the choice of $\mu,L$ and $b$. 

\section{Improved Decentralized Sequential Tests: SPRT-CSPRT}
\label{sec:SPRT-CSPRT}
This section considers some improvements over DualSPRT. The improved algorithms are theoretically analysed and their performance is compared with existing decentralized schemes. 

\subsection*{New Algorithms: SPRT-CSPRT and DualCSPRT}
\label{sec:ch4:imp_DualSPRT}
In DualSPRT presented in Section \ref{sec:DualSPRT_dualsprt}, observations $\{Y_k\}$ to the fusion center are not always identically distributed. Till the first transmission from secondary nodes, these observations come from i.i.d.\ noise distribution, but not after that. Since the non-asymptotic optimality of SPRT is known for i.i.d. observations only (\cite{Siegmund_SATC_Book}), using SPRT at the fusion center is not optimal.

We improve DualSPRT with the following modifications. Steps (1)-(3) (corresponding to the algorithm run at the local nodes) are same as in DualSPRT. The steps (4) and (5) are replaced by:
\begin{enumerate}
\setcounter{enumi}{3}
\item Fusion center runs two algorithms:
\begin{IEEEeqnarray}{rCl}
\label{eq:ch4:FuseCUSUM_1}
 F^1_k &= &(F^1_{k-1} + \log \left[g_{\mu_1}\left(Y_{k}\right)\left / g_{Z}\left(Y_{k} \right. \right) \right])^+,\, F^1_0=0,\IEEEeqnarraynumspace\\
\label{eq:ch4:FuseCUSUM_0}
 F^0_k &= &(F^0_{k-1} + \log \left[g_{Z}\left(Y_{k}\right)\left / g_{-\mu_0}\left(Y_{k} \right. \right) \right])^-,\, F^0_0=0,\IEEEeqnarraynumspace
\end{IEEEeqnarray}
where $(x)^+=\max(0,x)$, $(x)^-=\min(0,x)$, $\mu_1$ and $\mu_0$ are positive constants, $g_Z$ is the pdf of i.i.d.\ noise $\{Z_k\}$ at the fusion center and $g_{\mu}$ is the pdf of $\mu+Z_k$.
\item The fusion center decides about the hypothesis at time \[\inf \{ k: F^1_k \geq \beta_1~or~F^0_k \leq -\beta_0\}\] and $\beta_0,\beta_1>0$. The decision is $H_1$ if $F^1_{k} \geq \beta_1$ and $H_0$ if $F_k^0 \leq -\beta_0$.
\end{enumerate}
 
The following discussion provides motivation for this test.
\begin{enumerate}
\item
If the SPRT sum defined in (\ref{eq:ch3:FuseCUSUM}) goes below zero it delays in crossing the positive threshold $\beta_1$. Hence if we keep SPRT sum at zero whenever it goes below zero, it reduces $E_{DD}$. 
This happens in CUSUM (\cite{Page_Biometrika1954}). Similarly one can use a CUSUM statistic under $H_0$ also. These ideas are captured in (\ref{eq:ch4:FuseCUSUM_1}) and (\ref{eq:ch4:FuseCUSUM_0}).

\item The proposed test is also capable of reducing false alarms caused by noise $Z_k$ before first transmission at $t_1$ from the local nodes. For $F_k^1$ and $F_k^0$ to move away from zero, the mean of increments should be positive and negative respectively. Let $\widehat{\mu}_k=E[Y_k]$ at time $k$. Then,  
\begin{equation}
E_{\widehat{\mu}_k} \left[\log \frac{g_{\mu_1}(Y_k)}{g_{Z}(Y_k)} \right]=D(g_{\widehat{\mu}_k}||g_{Z})-D(g_{\widehat{\mu}_k}||g_{\mu_1}).
\end{equation}

Hence before $t_1$, positive mean value of increments is not possible. After $t_1$ under $H_1$ (assuming the local nodes make correct decisions, the justification for which is provided in Section \ref{sec:DualSPRT}), the mean of increments becomes more positive. Similarly for $F_k^0$. But in case of DualSPRT, SPRT sum at the fusion center has the increments given by $\log \frac{g_{\mu_1}(Y_k)}{g_{-\mu_0}(Y_k)}$. This is difficult to keep zero only before $t_1$ and thus creates more errors due to noise $Z_k$.

\item Even though the problem under consideration is hypothesis testing, this is essentially a change detection problem at the fusion center. The observations at the fusion center have the distribution of noise before $t_1$ and after $t_1$ the mean changes. But in our scenario, this is a composite sequential change detection problem with the observations that are not i.i.d.\ and we look for change in both directions, it is difficult to use existing algorithms available for sequential change detection. Nevertheless our test (\eqref{eq:ch4:FuseCUSUM_1}-\eqref{eq:ch4:FuseCUSUM_0}) provides a guaranteed performance in this scenario.

\end{enumerate}

We consider one more improvement. When a local Cognitive Radio SPRT sum crosses its threshold, it transmits $b_1/b_0$. This node transmits till the fusion center SPRT sum crosses the threshold. If it is not a false alarm, then its SPRT sum keeps on increasing (decreasing). But if it is a false alarm, then the sum will eventually move towards the other threshold. Hence instead of transmitting $b_1$/ $b_0$ the Cognitive Radio can transmit a higher / lower value in an intelligent fashion. This should improve the performance. Thus we modify step (3) in DualSPRT as, 
\begin{IEEEeqnarray}{rCl}
\label{eq:ch4:quantisation1}
Y_{k,l}&=\sum_{i=1}^4 &b_i^1 \mathbb{I}\{W_{k,l} \in [\gamma_1+(i-1)\Delta_1, \gamma_1+i\Delta_1)\}+\nonumber \\ && b_i^0 \mathbb{I}\{W_{k,l} \in [-\gamma_0-(i-1)\Delta_1, -\gamma_0-i\Delta_0)\} \IEEEeqnarraynumspace
\end{IEEEeqnarray}
where $\Delta_1$ and $\Delta_0$ are the parameters to be tuned at the Cognitive Radio. $4 \Delta_1$ and $4\Delta_0$ are taken as $\infty$. The drift under $H_1$ ($H_0$) is a good choice for $\Delta_1$ ($\Delta_0$).

We call the algorithm with the above two modifications as SPRT-CSPRT (with `C' as an indication about the motivation from CUSUM). 

If we use CSPRT at both the secondary nodes and the fusion center with the proposed quantisation methodology (we call it DualCSPRT) it works better as we will show via simulations in Section \ref{sec:ch4:perf_comp}. In Section \ref{sec:ch4:perf_analysis} we will theoretically analyse SPRT-CSPRT. As the performance of DualCSPRT (Figure \ref{fig:ch4:comp_sprt-csprt_dualSPRT}) is close to that of SPRT-CSPRT, we analyse only SPRT-CSPRT.

\subsection{Performance Comparison}
\label{sec:ch4:perf_comp}
Throughout the rest of this section we use $\gamma_{1,l}=\gamma_{0,l}=\gamma$, $\beta_1=\beta_0=\beta$ and $\mu_1=\mu_0=\mu$ for the simplicity of simulations and analysis. 

We apply DualSPRT, SPRT-CSPRT and DualCSPRT on the following example and compare their $E_{DD}$ for various values of $P_{MD}$. We assume that the pre-change distribution $f_0$ and the post change distribution $f_1$ are Gaussian with different means.

For simulations we have used the following parameters. There are 5 nodes ($L=5$) and $f_{0,l}\sim \mathcal{N}(0,1)$, for $1\leq l \leq L$. Primary to secondary channel gains are $0$, $-1.5$, $-2.5$, $-4$ and $-6 dB$ respectively (the corresponding post change means of Gaussian distribution with variance 1 are 1, 0.84, 0.75, 0.63 and 0.5). We assume $Z_k\sim \mathcal{N}(0,5)$ and the mean of increments of DualSPRT and SPRT-CSPRT at the fusion center is taken as $2\mu Y_k$, with $\mu$ being 1. We also take $D_0=D_1=0$, $\{b^1_1,b^1_2,b^1_3,b^1_4\}=\{1,2,3,4\}$, $\{b^0_1,b^0_2,b^0_3,b^0_4\}=\{-1,-2,-3,-4\}$ and $b_1$=$-b_0$=1 (for DualSPRT). Parameters $\gamma$ and $\beta$ are chosen from a range of values to achieve a particular $P_{FA}$. Figure \ref{fig:ch4:comp_sprt-csprt_dualSPRT} provides the $E_{DD}$ and $P_{MD}$ via  simulations. We see a significant improvement in $E_{DD}$ compared to DualSPRT. The difference increases as $P_{MD}$ decreases. The performance under $H_0$ is similar. 

Performance comparisons with the asymptotically optimal decentralized sequential algorithms which do not consider fusion center noise (DSPRT \cite{Fellouris_TIT2011}, Mei's SPRT \cite{Mei_TIT2008}) are given in Figure \ref{fig:ch4:comp_DSPRT_MeiSPRT}. Note that DualSPRT and SPRT-CSPRT include fusion center noise. Here we take $f_{0,l}\sim \mathcal{N}(0,1)$, $f_{1,l}\sim \mathcal{N}(1,1)$ for $1\leq l \leq L$ and $Z_k\sim \mathcal{N}(0,1)$. We find that the performance of SPRT-CSPRT is close to that of DSPRT (which is second order asymptotically optimal) and better than Mei's SPRT (which is first order asymptotically optimal). Similar comparisons were obtained with other data sets.

\begin{figure}[h]
\vspace{0 cm}
\centering
\subfloat[Comparison among DualSPRT, SPRT-CSPRT and DualCSPRT for different SNR's between the Primary and the secondary users, under $H_1$.]{\includegraphics[trim = 16mm 0 19mm 5mm, clip=true,scale=0.34]{./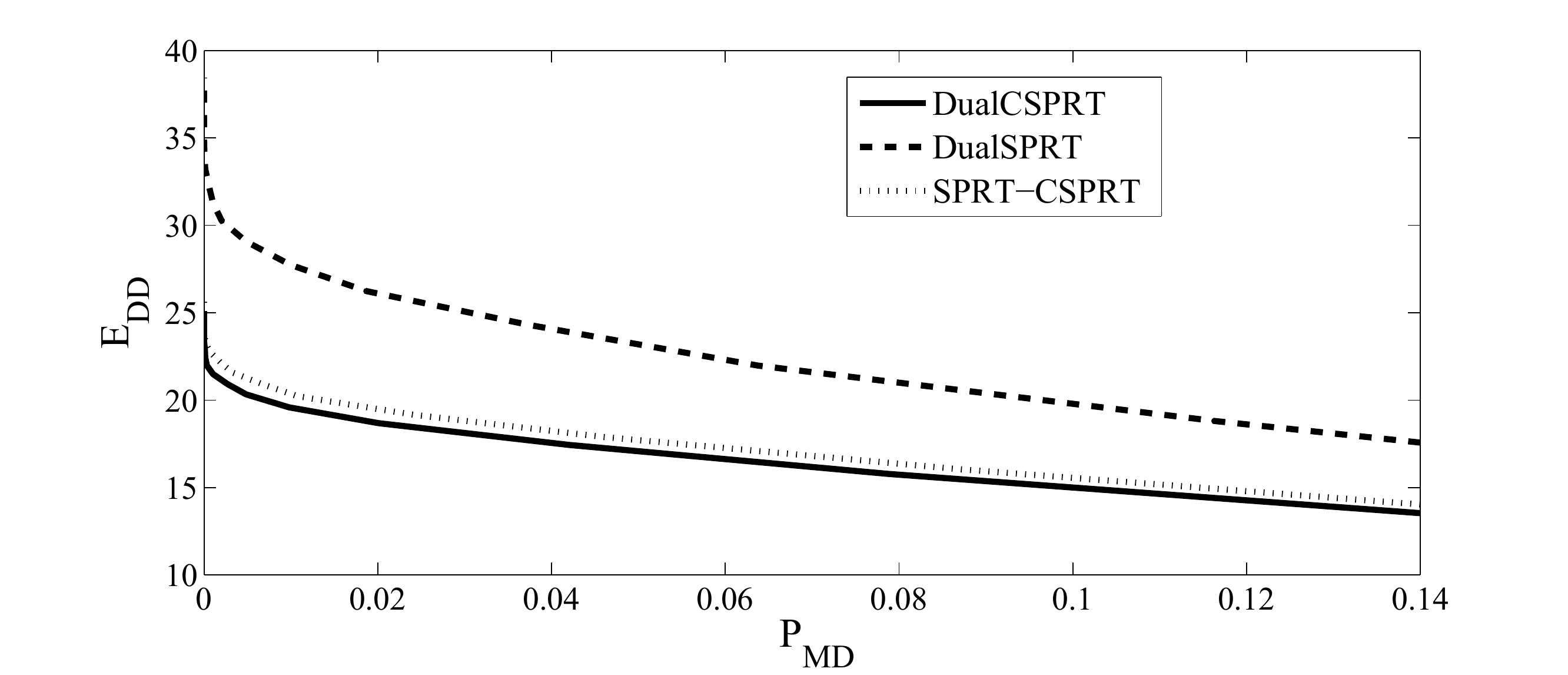} \label{fig:ch4:comp_sprt-csprt_dualSPRT}}\\
\subfloat[Comparison among DualSPRT, SPRT-CSPRT, Mei's SPRT and DSPRT under $H_1$.]{\includegraphics[trim = 16mm 0 19mm 5mm, clip=true,scale=0.32]{./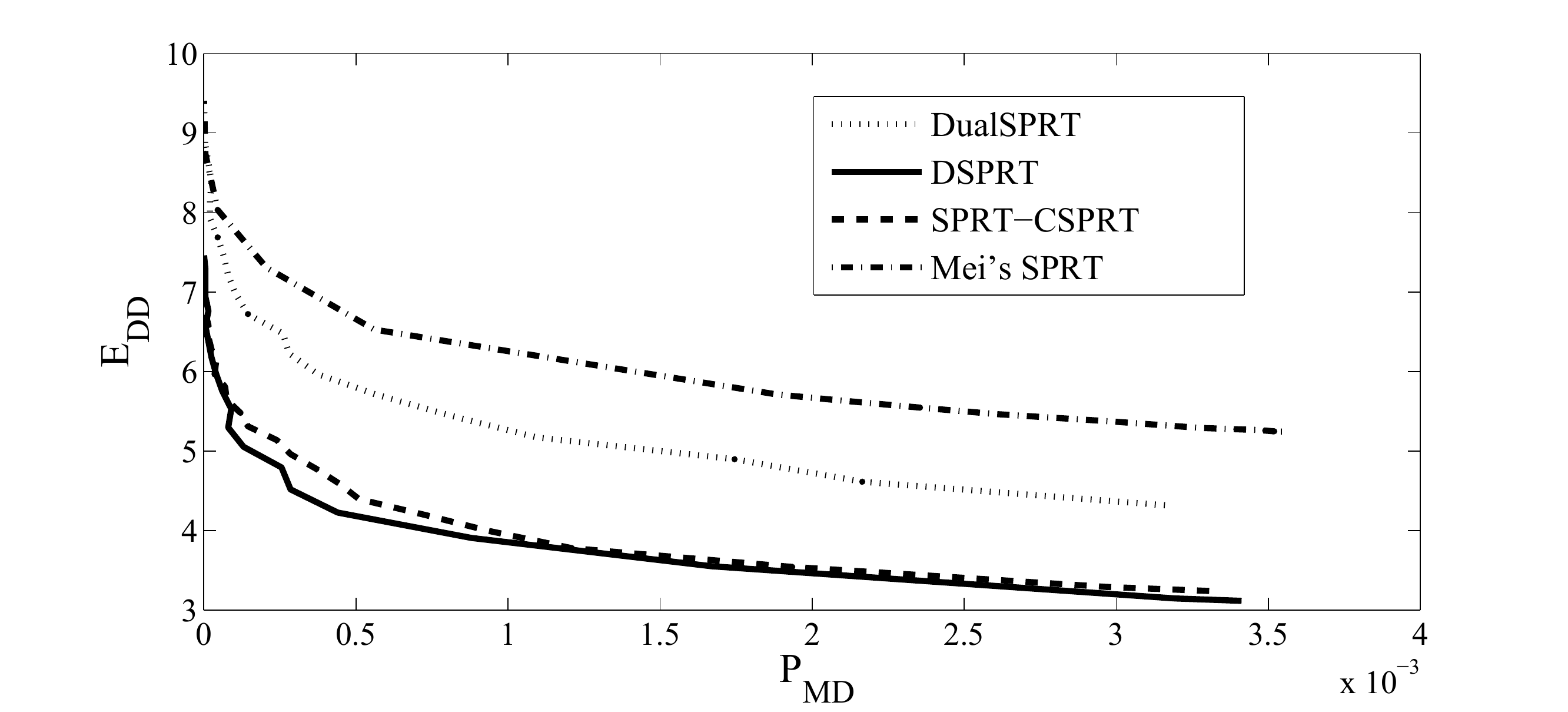}\label{fig:ch4:comp_DSPRT_MeiSPRT}}
\vspace{-0.2 cm} \caption{Comparisons with SPRT-CSPRT} 
\vspace{-0.1cm}
\end{figure}

\subsection{Performance Analysis of SPRT-CSPRT}
\label{sec:ch4:perf_analysis}
$E_{DD}$ and $P_{MD}/P_{FA}$ analysis is same under $H_1$ and $H_0$. Hence we provide analysis under $H_1$ only.
\subsubsection{$P_{MD}$ Analysis}
\label{subsec:ch4:pfa_analysis}
Between each change of mean of increments (which occurs due to the change in number of Cognitive Radios transmitting to the fusion node and due to the change in the value transmitted according to the quantisation rule (\ref{eq:ch4:quantisation1})) at the fusion center, under $H_1$, (\ref{eq:ch4:FuseCUSUM_1}) has a positive drift and behaves approximately like a normal random walk. Under $H_1$ (\ref{eq:ch4:FuseCUSUM_0}) also has a positive drift, but due to the $\min$ in its expression it will stay around zero and as the event of crossing negative threshold is rare (\ref{eq:ch4:FuseCUSUM_0}) becomes a reflected random walk between each drift change. Similarly under $H_0$, (\ref{eq:ch4:FuseCUSUM_1}) and (\ref{eq:ch4:FuseCUSUM_0}) become reflected random walk and normal random walk respectively. The false alarm occurs when the reflected random walk crosses its threshold. 

Under $H_1$, let
\begin{equation*}
\label{eq:ch4:FPT}
\tau_\beta  \stackrel{\triangle}{=} \inf \{ k \geq 1 : F^0_k \leq -\beta \} \text{ and }  T_\beta  \stackrel{\triangle}{=} \inf \{ k \geq 1 : F^1_k \geq \beta \}.
\end{equation*}
Following the argument in Section \ref{subsub:ch3:pfa_analysis} for $P_{MD}$, we get,
\begin{IEEEeqnarray}{rCl}
\label{eq:ch4:Pfa_analysis}
P_{1}(\text{reject $H_1$})& = &P_{1}(\tau_\beta < T_\beta)\approx P_{1}(\tau_\beta <t_1). \nonumber\\
P_{1}(\tau_\beta <t_1)&=&\sum_{k=1}^\infty P_{1}( \tau_{\beta} \leq k,k <t_1)\nonumber \\
&=&\sum_{k=1}^\infty P_{1}( \tau_{\beta} \leq k| k<t_1) P_{1}(t_1>k).\IEEEeqnarraynumspace
\end{IEEEeqnarray}
In the following we compute $P_{1}(\tau_\beta >  x|\tau_\beta <t_1)$ and $P_{1}(t_1>k)$. It is shown in \cite{Rootzen_AAP1988} that, 
\begin{eqnarray}
\label{eq:ch4:expoSensor} 
\lim_{\beta \to \infty} P_{1}\{\tau_\beta >  x|\tau_\beta <t_1\} = \exp(-\lambda_\beta x),\, x>0,
\end{eqnarray}
where $\lambda_\beta$ is obtained by finding solution to an integral equation obtained via renewal arguments (\cite{Ross_SP_Book}). Let $L(s)$ be the mean of $\tau_\beta$ with $F^0_0=s$ and $S_k=\log\left[g_{Z}\left(Y_{k}\right)/g_{-\mu_0}\left(Y_{k} \right)\right]$. Note that $\{S_k, k < t_1\}$ are i.i.d.\ From the renewal arguments, by conditioning on $S_1=z$, 
\begin{IEEEeqnarray*}{rCl}
\label{eq:ch4:FPTIntegralEqn}
L(s)& = &P(S_1>-s)(L(0)+1) \nonumber \\ && + \int_{-\beta-s}^{-s} (L(s+z)+1)\, dF_{S_1}(z)\, dz + F_{S_1}(-\beta-s),\nonumber
\end{IEEEeqnarray*}
\noindent where $F_S$ is the distribution of $S_k$ before the first transmission from the local nodes. This is a Fredholm integral equation of the second kind (\cite{Saaty_book}). Existence of a unique solution for it is shown in \cite{Banerjee_WCOM}. By solving these equations numerically, we get $\lambda_\beta=1/L(0)$.

From the central limit theorem approximation given in Section \ref{subsub:ch3:edd_analysis} we can find the distribution of $t_1$. Thus (\ref{eq:ch4:Pfa_analysis}) provides,
\[P_{1}(\text{False alarm before $t_1$})\approx\sum_{k=1}^\infty (1-e^{-\lambda_{\beta} k}) \prod_{l=1}^{L}(1-\Phi_{N_{l}}(k)),\]
\noindent where $\Phi_{N_{l}}$ is the Cumulative Distribution Function of $N_{l}$, obtained from the Gaussian approximation.
%
\subsubsection{$E_{DD}$ Analysis}
In this section we compute $E_{DD}$ theoretically. Recall that $t_i$ also approximates the first time at which $i$ local nodes are transmitting. Mean of $t_i$  can be computed from the method explained in \cite{Barakat_SMA2004}, for finding $k^{th}$ central moment of non i.i.d. $i^{th}$ order statistics. 

Between $t_{i}$ and $t_{i+1}$ the mean of the increments at the fusion center is not necessarily constant because there are four thresholds (each corresponds to different quantizations) at the secondary node. The transmitted value changes after crossing each threshold, $b_1^1\to b_2^1\ldots\to b_4^1$. Let $t^j_i, 1\leq j \leq 3$ be the time points at which a node changes the transmitting values from $b_j^1$ to $b_{j+1}^1$ between $t_{i}$ and $t_{i+1}$. We assume that with a high probability the secondary node with the lowest first passage time mean will transmit first, the node with the second lowest mean will transmit second and so on. This is justified by the fact that the distribution of the first passage time of $\gamma >0$ by a random walk with drift $\delta >0$ and variance $\sigma^2$ is $\mathcal{N}(\frac{\gamma}{\delta},\frac{\sigma^2\, \gamma}{\delta^3})$. Thus if $\delta$ is large, the mean $\gamma / \delta$ is small and the variance $\sigma^2 \gamma/ \delta^3$ is much smaller. In the following we will make computations under these approximations. The time difference between $t^{j th}_i$ and $t^{j th}_{i+1}$ transmission can be calculated if we take the second assumption (=$\Delta_1/\delta_{1,l}$). We know $E[t_i]$ for every $i$ from an argument given earlier. Suppose $l^{th}$ node transmits at $t_i^{th}$ instant and if $E[t_i]+\Delta_1/\delta_{1,l} < E[t_{i+1}]$ then $E[t^1_i]=E[t_i]+\Delta_1/\delta_{1,l}$. Similarly if $E[t^1_i]+\Delta_1/\delta_{1,l} < E[t_{i+1}]$ then $E[t^2_{i}]=E[t^1_i]+\Delta_1/\delta_{1,l}$ and so on. Let us represent the sequence $t=\{t_1,t^1_1,t^2_1,t^3_1,t_2,...,t^5_5\}$ (entry only for existing ones by the above criteria) by $T=\{T_1,T_2,T_3,...\}$.  

Let $\delta^k_{i,FC}$ be the mean of the increments at the fusion center between $T_k$ and $T_{k+1}$, under $H_i$. Thus $T_k$'s are the transition epochs at which the mean of the increments of fusion center changes from $\delta^{k-1}_{i,FC}$ to $\delta^{k}_{i,FC}$. Also let $\bar{F_k}=E[F_{T_k-1}]$ be the mean value of $F_k$ just before the transition epoch $T_k$. With the assumption of the very low $P_{fa}$ at the local nodes and from the knowledge of the sequence $t$ we can easily calculate $\delta^k_{1,FC}$ for each $T_k$. Similarly $\bar{F}_{k+1}=\bar{F_{k}}+\delta^{k}_{1,FC}(E[T_{k+1}]-E[T_{k}])$. Then,
\begin{equation}
\label{eq:ch4:Eddanalysis1}
E_{DD}\approx E[T_{l^*}]+\frac{\beta-\bar{F_{l^*}}}{\delta_{1,FC}^{l^*}}
\end{equation}
where 
\[l^*=min\{j:\delta_{1,FC}^j>0 \text{ and } \frac{\beta-\bar{F_j}}{\delta^j_{1,FC}}< E[T_{j+1}]-E[T_j]\}.\]

The above approximation of $E_{DD}$ is based on Central Limit Theorem and Law of Large Numbers and hence is valid for any distributions with finite second moments. 

Figure \ref{fig_ch4:table_analysis_sprt_csprt} provides the comparison between simulation and analysis. We used the same set-up as in Section \ref{sec:ch4:perf_comp} (with $Z_k\sim \mathcal{N}(0,1)$). We see a reasonable approximation.

\begin{figure}[h]
\vspace{-0.2 cm}
\subfloat[Probability of false alarm]{\hspace{-0.3 cm}\includegraphics[trim = 16mm 0 19mm 5mm, clip=true,scale=0.29]{./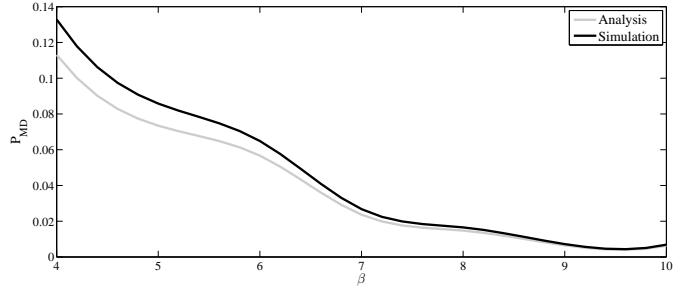}}\\
\subfloat[Expected detection delay]{\hspace{-0.3 cm}\includegraphics[trim = 16mm 0 19mm 5mm, clip=true,scale=0.29]{./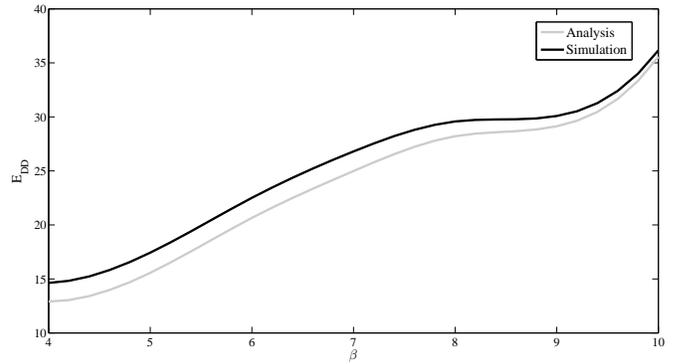}}
\vspace{-0.2 cm} \caption{Comparison of $E_{DD}$ and $P_{MD}$ obtained via analysis and simulation under $H_1$ for SPRT-CSPRT.} 
\vspace{-0.1 cm}
\label{fig_ch4:table_analysis_sprt_csprt}
\end{figure}

\section{Unknown Received SNRs and Fading}
\label{sec:SNR_uncertainty_fading}
This section considers the extensions of DualSPRT and SPRT-CSPRT to take care of the SNR uncertainty and the slow fading between the primary user and a Cognitive Radio. Since the transmissions from CR to FC are in CR network, we assume reporting channel to FC as AWGN only. This assumption is commonly made (\cite{Akyildiz_PC2011,Unnikrishnan_JSTSP2008}).
\subsection{Different and unknown SNRs}
\label{subsec:ch3:unk_rec_diff_SNR}
We consider the case where the received signal power from the PU to a CR node is fixed but not known to the local Cognitive Radio nodes. This can happen if the transmit power of the primary is not known and/or there is unknown shadowing. 
Now we limit ourselves to the energy detector where the observations $X_{k,l}$ are average energy of $M$ samples received by the $l^{th}$ Cognitive Radio node. Then for somewhat large $M$, the distributions of $X_{k,l}$ under $H_0$ and $H_1$ can be approximated by Gaussian distributions: $f_{0,l}\sim \mathcal{N}({\sigma}_l^2,2{\sigma}_l^4/M)$ and $f_{1,l}\sim \mathcal{N}(P_l+{\sigma_l}^2,2(P_l+{\sigma_l}^2)^2/M)$, where $P_l$ is the received power and $\sigma^2_l $ is the noise variance at the $l^{th}$ CR node. Under low SNR conditions ${(P_l+\sigma^2_l)}^2 \approx \sigma^4_l$ and hence $X_{k,l}$ are Gaussian distributed with mean change under $H_0$ and $H_1$.  Now taking $X_{k,l}-\sigma^2_l$ as the data for the detection algorithm at the $l^{th}$ node, since $P_l$ is unknown we can formulate this problem as a sequential hypothesis testing problem with
\begin{equation}
\label{eq:ch3:comphyp11}
H_0:\theta=0\,;\,H_1:\theta\geq \theta_1\,,
\end{equation}where $\theta$ is $P_l$ under $H_1$ and ${\theta_1}$ is appropriately chosen.

The problem
\begin{equation}
\label{eq:ch3:LaiSPRT_formln}
H_0 : \theta\leq\theta_0\,;\,H_1 : \theta\geq\theta_1\,,
\end{equation}
subject to
\[P_{\theta}\{\text{reject $H_0$}\}\leq\alpha\text{, for }\theta\leq\theta_0,\] \[P_{\theta}\{\text{reject $H_1$}\}\leq\beta\text{, for }\theta\geq\theta_1,\]
for exponential family of distributions is well studied in (\cite{Lai_AS1988}). The following algorithm of Lai \cite{Lai_AS1988} is asymptotically Bayes optimal and hence we use it at the local nodes instead of SPRT. Let $\theta\in A=[a_1,a_2]$. Define
\begin{eqnarray*}
W_{n,l}&=&max\left[\sum_{k=1}^n\log \frac{f_{\hat{\theta_n}}(X_k)}{f_{\theta_0}(X_k)},\sum_{k=1}^n\log \frac{f_{\hat{\theta_n}}(X_k)}{f_{\theta_1}(X_k)}\right],\\
N_l(g,c)&=&\inf\left\{n:W_{n,l}\geq g(n\,c)\right\},
\end{eqnarray*}
where $g()$ is a time varying threshold and $c>0$ is a design parameter. The function $g$ satisfies $g(t)\approx \log (1/t)$ as $t \to 0$ and is the boundary of an associated optimal stopping problem for the Wiener process (\cite{Lai_AS1988}). $\hat{\theta}_n$ is the Maximum-Likelihood estimate of $\theta$ bounded by $a_1$ and $a_2$. For Gaussian $f_0$ and $f_1$, ${\hat{\theta}}_n=\max\{a_1,\min[S_n/n,a_2]\}$. At time $N_l(g,c)$ decide upon $H_0$ or $H_1$ according as ${\hat{\theta}_{N_l(g,c)}\leq\theta^*}$ or ${\hat{\theta}_{N_l(g,c)}\geq\theta^*},$ where $\theta^*$ is obtained by solving $D(f_{\theta^*}||f_{\theta_0})=D(f_{\theta^*}||f_{\theta_1})$.

For our case where $H_0: \theta=0$, unlike in (\ref{eq:ch3:LaiSPRT_formln}) where $H_0: \theta \leq0$, $E_0[N_l(g,c)]$ largely depends upon the value $\theta_1$. As $\theta_1$ increases, $E_0[N_l(g,c)]$ decreases and $E_1[N_l(g,c)]$ increases. If $P_l\in [\underline{P}, \overline{P}]$ for all $l$ then a good choice of $\theta_1$, is $(\overline{P}-\underline{P})/2$.

\subsubsection{GLR-SPRT} 
First we modify DualSPRT. In the distributed setup with the received power at the local nodes unknown, the local nodes will use the Lai's algorithm mentioned above while the fusion node runs the SPRT. All other details remain same. We call this algorithm GLR-SPRT.
\subsubsection{GLR-CSPRT} 
This is a modified version of SPRT-CSPRT. Here, we modify GLR-SPRT to GLR-CSPRT with appropriate change in quantisation and using CSPRT at the fusion center instead of SPRT. The quantisation (\ref{eq:ch4:quantisation1}) is changed in the following way: if $\hat{\theta}_{N}\geq\theta^*$, 
let $\mathcal{I}_1=[g(k\,c),g(k\,c\,3\,\Delta))$, $\mathcal{I}_2=[g(k\,c\,3\,\Delta),g(k\,c\,2\,\Delta))$, $\mathcal{I}_3=[g(k\,c\,2\,\Delta),g(k\,c\,\Delta))$ and $\mathcal{I}_4=[g(k\,c\,\Delta),\infty)$. $Y_{k,l}=b^1_n  \text{ if }W_{k,l}\in \mathcal{I}_n$ for some $n$.
If $\hat{\theta}_{N}\leq\theta^*$ we will transmit from $\{b^0_1,b^0_2,b^0_3,b^0_4\}$ under the same conditions. Here, $\Delta$ is a tuning parameter and $0\leq 3\Delta \leq1$. 

The performance comparison of GLR-SPRT and GLR-CSPRT for the example in Section \ref{sec:ch4:perf_comp} (with $Z_k\sim \mathcal{N}(0,1)$) is given in Figure \ref{fig:ch4:glr-csprt}. Here $\Delta=0.25$. As the performance under $H_1$ and $H_0$ are different, we give the values under both. We can see that GLR-SPRT is always inferior to GLR-CSPRT. For $E_{DD}$ under $H_1$, interestingly GLR-CSPRT has lesser values than that of SPRT-CSPRT for $P_{FA}>0.02$ (note that SPRT-CSPRT has complete knowledge of the SNRs), while under $H_0$ it has higher values than SPRT-CSPRT.

\begin{figure}[h]
\vspace{-0.2 cm}
\centering \subfloat[Under $H_{1}$]{\includegraphics[trim = 16mm 0 19mm 5mm, clip=true,scale=0.33]{./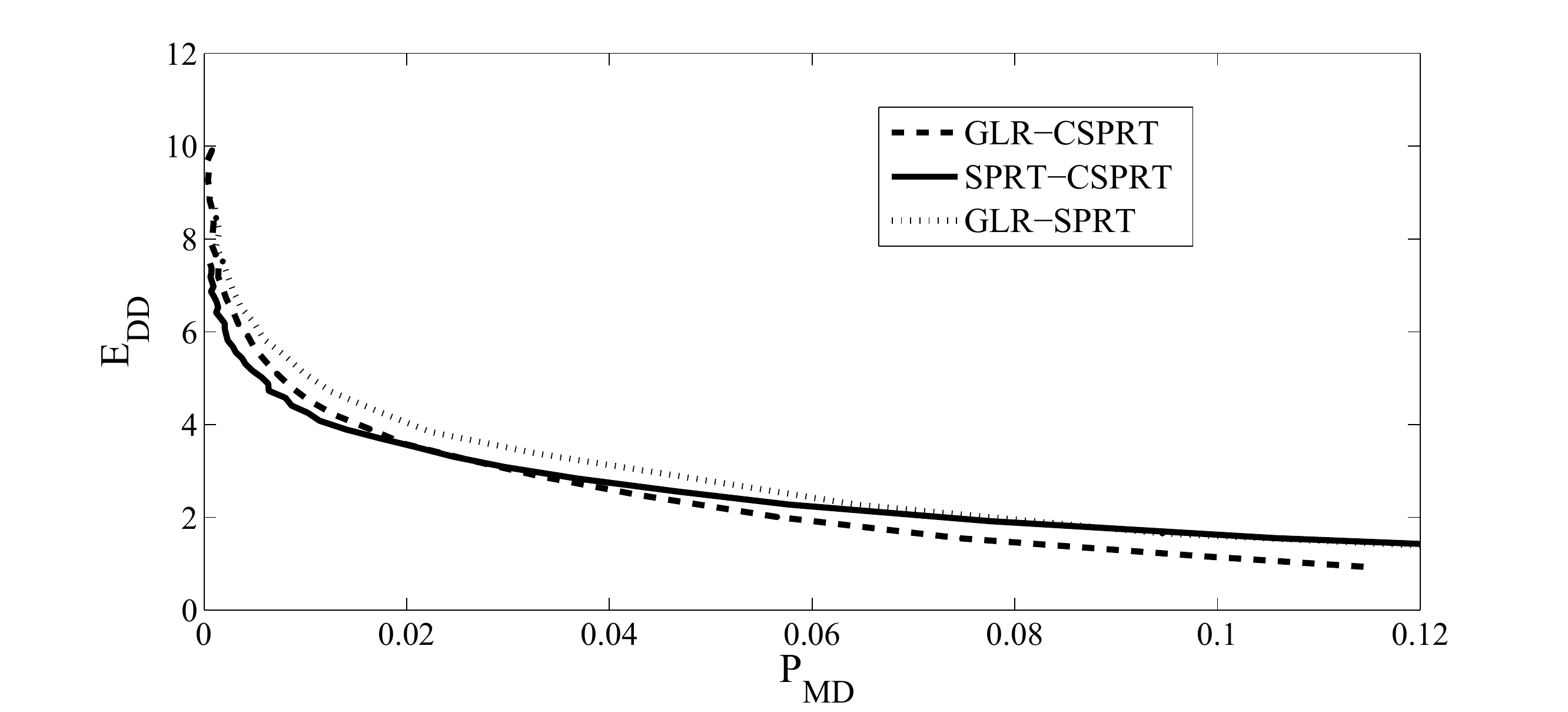}\label{fig:ch4:glr_1}}\\
\subfloat[Under $H_{0}$]{\includegraphics[trim = 16mm 0 19mm 5mm, clip=true,scale=0.35]{./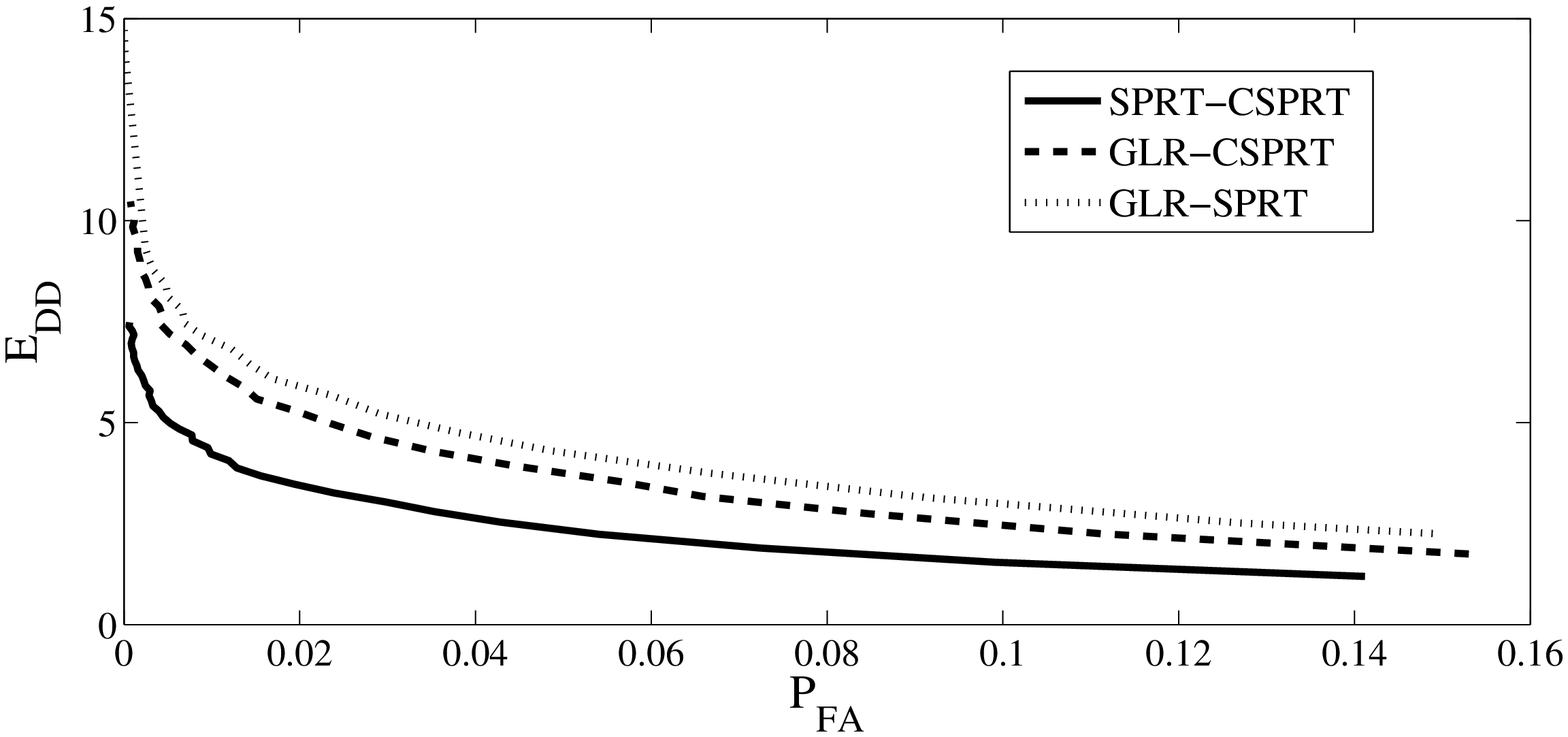}\label{fig:ch4:glr_0}}
\caption{Comparison among SPRT-CSPRT, GLR-SPRT and GLR-CSPRT for different SNR's between the Primary and the secondary users} 
\vspace{-0.2 cm}
\label{fig:ch4:glr-csprt}
\end{figure}
%
%
%
\subsection{Channel with Fading}
In this section we consider the system where the channels from the primary transmitter to the secondary nodes have fading $(h_l \neq 1)$. We assume slow fading, i.e., the channel coherence time is longer than the hypothesis testing time.

When the fading gain $h_l$ is known to the $l^{th}$ secondary node then this case can be considered as the different SNR case as in the example given in Section \ref{subsub:ch3:eg2_analysis}. Thus we consider the case where the channel gain $h_l$ is not known to the $l^{th}$ node. 

We consider the energy detector setup of Section \ref{subsec:ch3:unk_rec_diff_SNR}. However, now $P_l$, the received signal power at the local node $l$ is random. If the fading is Rayleigh distributed then $P_l$ has exponential distribution. The hypothesis testing problem becomes
\begin{equation}
\label{comphyp}
H_0:f_{0,l}\sim \mathcal{N}(0,{\sigma}^2); H_1:f_{1,l}\sim \mathcal{N}(\theta,{\sigma}^2)
\end{equation}
where $\theta$ is random with exponential distribution and ${\sigma}^2$ is the variance of noise. We will assume that $\sigma^2$ is known at the nodes.

We are not aware of this problem being handled via sequential hypothesis testing before. However we use Lai's algorithm in Section \ref{subsec:ch3:unk_rec_diff_SNR} where we take $\theta_1$ to be the median of the distribution of $\theta$, i.e., $P(\theta\geq\theta_1)=1/2$. This seems a good choice for $\theta_1$ as a compromise between $E_0[N]$ and $E_1[N]$.
\subsubsection{GLR-SPRT}
First we apply the technique on GLR-SPRT. We use an example where ${\sigma}^2=1, \theta=exp(1)$, Var($Z_k$)~=~1, and $L=5$. The performance of this algorithm is compared with that of DualSPRT (with perfect channel state information) in Figure \ref{fig_ch4:sf_case}. We observe that under $H_1$, for high $P_{MD}$ this algorithm works better than DualSPRT with channel state information, but as $P_{MD}$ decreases DualSPRT becomes better and the difference increases. For $H_0$, GLR-SPRT is always worse and the difference is almost constant.
\subsubsection{GLR-CSPRT}
Figure \ref{fig_ch4:sf_case} also provides comparison of DualSPRT, GLR-SPRT and GLR-CSPRT. Notice that the comment given for $E_{DD}$ for Figure \ref{fig:ch4:glr_1} is also valid here.
\begin{figure}[h]
\vspace{-0.2 cm}
\subfloat[Under $H_1$]{\hspace{-0.3 cm}\includegraphics[trim = 16mm 0 19mm 5mm, clip=true,scale=0.29]{./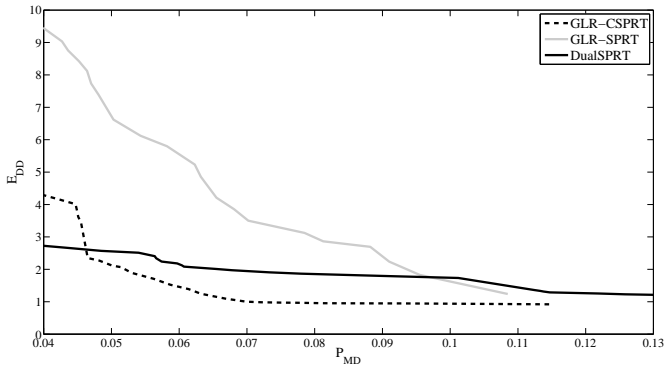} }\\
\subfloat[Under $H_0$]{\hspace{-0.3 cm}\includegraphics[trim = 16mm 0 19mm 5mm, clip=true,scale=0.29]{./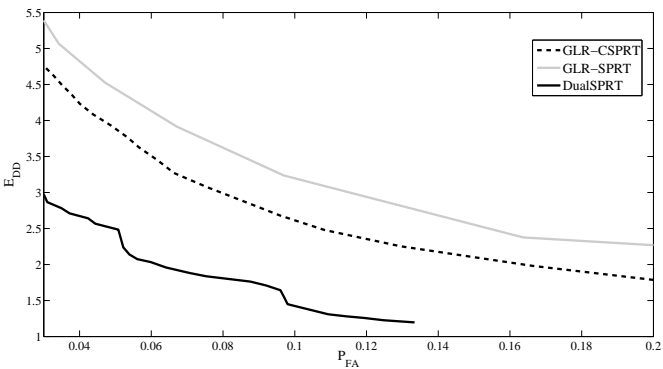}}
\vspace{-0.2 cm} \caption{Comparison among DualSPRT, GLR-SPRT and GLR-CSPRT with slow fading between the primary and the secondary users} 
\vspace{-0.2 cm}
\label{fig_ch4:sf_case}
\end{figure}

\section{Conclusions}
\label{sec:conclusions}
This paper presents fast algorithms for cooperative spectrum sensing satisfying reliability constraints. We have presented and analysed DualSPRT, a decentralized sequential hypothesis test. Simulation results corroborate the theoretical study of DualSPRT. Asymptotic properties of DualSPRT are also explored and its performance can approach asymptotically Bayes optimal tests. Improvement over DualSPRT using CUSUM statics for the fusion center test leads to another algorithm in which the selection of parameters is easy to choose apart from performance enhancement. We also provide approximate theoretical analysis of the algorithm. Numerical experiments show that this algorithm performs as well as an asymptotic order-2 optimal algorithm without fusion center noise, proposed in literature. We further extend our algorithms to cover the case of unknown SNR and channel fading and obtain satisfactory performance compared to perfect channel state information case. 
\appendices
\section{Proof of Theorem \ref{thm:ch3:edd}}
\label{proof:thm:ch3:edd}
We will prove the theorem under $H_0$. The proof under $H_1$ will follow in the same way.

Let $\nu(a)$ be the stopping time when a random walk starting at zero and formed by the sequence $\{\log \frac{g_{\mu_1(Z_k)}}{g_{-\mu_0(Z_k)}}+(\Delta(\mathcal{A}^0)-E_0[\log \frac{g_{\mu_1(Z_k)}}{g_{-\mu_0(Z_k)}}]), k \geq \tau(c)+1\}$ (with $-$ve drift under $H_0$) crosses $a$. Then,
\[N \leq N^0 \leq \tau(c)+\nu(-|\log c|-F_{\tau(c)+1}).\]
Therefore,
\begin{equation}
\label{eq:ch3:proof_e_0_0th_term}
\frac{N}{|\log c|} \leq \frac{\tau(c)}{|\log c|}+\frac{\nu(-|\log c|-F_{\tau(c)+1})}{|\log c|}.
\end{equation}
\indent We consider the first term on the R.H.S.\ of (\ref{eq:ch3:proof_e_0_0th_term}). From \cite[Remark 4.4, p.~90]{GUT_BOOK_2009} as $c \to 0$, $\tau_l(c) \to \infty$ a.s. and $\displaystyle \lim_{c \to 0}\frac{\tau_l(c)}{|\log c|}=-\frac{r_l}{\delta_{0,l}}=\frac{1}{D_{tot}^0}$ a.s. Therefore,
\begin{equation}
\label{eq:ch3:asy_pty_e_proof1}
\frac{\tau(c)}{|\log c|} \to \max_l \left\{-\frac{r_l}{\delta_{0,l}} \right\}=\frac{1}{D_{tot}^0} \text{ a.s.}
\end{equation}
Furthermore, from \cite[proof of Theorem 1 (i) $\Rightarrow$ (ii) p.~871]{Janson_AAP}, it can be seen that $\{\tau_l(c)/|\log c|\}$ is uniformly integrable for each $l$. Therefore, $\{\tau(c)/|\log c|\}$ is also uniformly integrable and hence,
\begin{equation}
\label{eq:proof_edd_last_ext_time_exp_converg}
\frac{E_0[\tau(c)]}{|\log c|} \to \frac{1}{D_{tot}^0}.
\end{equation}

The second term in R.H.S.\ of \eqref{eq:ch3:proof_e_0_0th_term},
\begin{equation}
\label{eq:ch3:proof_e_0_2nd_term}
\frac{\nu(-|\log c|-F_{\tau(c)+1})}{|\log c|} \leq \frac{\nu(-|\log c|)}{|\log c|}+\frac{\nu(-F_{\tau(c)+1})}{|\log c|}.
\end{equation}
\noindent We know, from \cite[Chapter III]{GUT_BOOK_2009}, as $c \to 0$
\begin{equation}
\label{eq:ch3:proof_e_0_3rd_term}
\frac{\nu(-|\log c|)}{|\log c|} \to- \frac{1}{\Delta(\mathcal{A}^0)} \text{ a.s.\ and in $L_1$}.
\end{equation}

Next consider $\nu(-F_{\tau(c)+1})$. Let $\widehat{F}_k^*$ be a random walk formed from $|\xi_k^*|$. It can be shown that $\widehat{F}_k^*$ stochastically dominates $F_k$ and thus we can make $\widehat{F}_k^* \geq F_k$ a.s.\ for all $k \geq 0$. Then,

\begin{IEEEeqnarray*}{rCl}
\frac{\nu(-F_{\tau(c)+1})}{|\log c|} &\leq & \frac{\nu(-\widehat{F}_{\tau(c)+1}^*)}{|\log c|}. \nonumber
\end{IEEEeqnarray*}
Also,
\begin{IEEEeqnarray*}{rCl}
\frac{\widehat{F}_{\tau(c)+1}^*}{|\log c|} &=& \frac{\widehat{F}_{\tau(c)+1}^*}{\tau(c)+1} \frac{\tau(c)+1}{|\log c|} \to E[|\xi_1^*|]\frac{1}{D_{tot}^0} \text{ a.s.} \IEEEeqnarraynumspace
\end{IEEEeqnarray*}
Thus,
\begin{eqnarray}
\label{eq:as_conv_entho}
\frac{\nu(-\widehat{F}^*_{\tau(c)+1})}{|\log c|}& = &\frac{\nu (-\widehat{F}^*_{\tau(c)+1})}{\widehat{F}^*_{\tau(c)+1}} \frac{\widehat{F}^*_{\tau(c)+1}}{|\log c|}\nonumber \\
&\to &\frac{-1}{\Delta(\mathcal{A}^0)}\frac{E[|\xi_1^*|]}{D_{tot}^0} \text{ a.s.}
\end{eqnarray}

From \eqref{eq:ch3:proof_e_0_0th_term}, \eqref{eq:ch3:asy_pty_e_proof1}, \eqref{eq:ch3:proof_e_0_2nd_term}, \eqref{eq:ch3:proof_e_0_3rd_term} and \eqref{eq:as_conv_entho},
\begin{equation*}
\overline{\lim_{c\to 0}} \frac{N}{|\log c|} \leq \frac{1}{D_{tot}^0}-\frac{1}{\Delta(\mathcal{A}^0)}\frac{E_0[|\xi_1^*|]}{D_{tot}^0} \text{ a.s.}
\end{equation*}

Now we show $L_1$ convergence.
For $\alpha>1$, 
{\allowdisplaybreaks
\begin{IEEEeqnarray}{rCl}
\lefteqn{\frac{E_0[{\nu(-\widehat{F}^*_{\tau(c)+1})}^{\alpha}]}{|\log c|^{\alpha}}}\nonumber\\
 &=& \frac{1}{|\log c|^{\alpha}} \int\limits_{0}^{|\log c|} E_0[{\nu(-x)}^{\alpha}|\widehat{F}^*_{\tau(c)+1}=x]\,dP_{\widehat{F}^*_{\tau(c)+1}}(x)\nonumber \\
& & +\,\frac{1}{|\log c|^{\alpha}} \int\limits_{|\log c|}^{\infty} E_0[{\nu(-x)}^{\alpha}]\,dP_{\widehat{F}^*_{\tau(c)+1}}(x)\nonumber \\
\label{eq:ch3:asy_pty_e_proof2}
&\leq & \frac{E_0[{\nu(-|\log c|)}^{\alpha}]}{|\log c|^{\alpha}}+\nonumber \\
& & \ \ \ \ \ \ \ \int\limits_{|\log c|}^{\infty} \frac{E_0[{\nu(-x)}^{\alpha}]}{x^{\alpha}}\, \frac{x^{\alpha}}{|\log c|^{\alpha}}\,dP_{\widehat{F}^*_{\tau(c)+1}}(x).\IEEEeqnarraynumspace 
\end{IEEEeqnarray}}
When $-$ve part of the increments of random walk of $\nu(t)$ has finite $\alpha^{\text{th}}$ moment (\cite[Chapter 3, Theorem 7.1]{GUT_BOOK_2009}), $E_0[{\nu(-x)}^{\alpha}]/{x^{\alpha}} \to {(-{1}/{\Delta(\mathcal{A}^0)})}^{\alpha}$ as $x \to \infty$. Thus for any $\epsilon >0$, $\exists\, M$ such that
\[\frac{E_0[{\nu(-x)}^{\alpha}]}{x^{\alpha}} \leq \left(\epsilon+{\left(\frac{-1}{\Delta(\mathcal{A}^0)}\right)}^{\alpha}\right) \text{ for $x > M$}.\]
Take $c_1$ such that $|\log c| >M$ for $c < c_1$. Then, for $c<c_1$,
\begin{IEEEeqnarray}{rCl}
\lefteqn{\int\limits_{|\log c|}^{\infty} \frac{E_0[{\nu(-x)}^{\alpha}]}{x^{\alpha}}\, \frac{x^{\alpha}}{|\log c|^{\alpha}}\,dP_{\widehat{F}^*_{\tau(c)+1}}(x)}\nonumber \\
& \leq & \frac{\epsilon+{\left(\frac{-1}{\Delta(\mathcal{A}^0)}\right)}^{\alpha}}{|\log c|^{\alpha}} \, \int\limits_{|\log c|}^{\infty} x^{\alpha}\,dP_{\widehat{F}^*_{\tau(c)+1}}(x) \nonumber\\
\label{eq:ch3:asy_pty_e_proof3}
&\leq & \frac{\epsilon+{\left(\frac{-1}{\Delta(\mathcal{A}^0)}\right)}^{\alpha}}{|\log c|^{\alpha}}\,E_0[(\widehat{F}^*_{\tau(c)+1})^{\alpha}].\IEEEeqnarraynumspace
\end{IEEEeqnarray}

Since $\lim_{c \to 0} \frac{\tau(c)}{|\log c|}=\frac{1}{D_{tot}^0}$ a.s. and $\{\frac{{ \tau(c)}^{\alpha}}{{|\log c|}^{\alpha} }\}$ is uniformly integrable, when $E_0 \left[ \left(\log \frac{f_{1,l}(X_{1,l})}{f_{0,l}(X_{1,l})} \right)^{\alpha+1} \right] < \infty,$ $1\leq l \leq L$ and $E[|\xi_1^*|^{\alpha+1}]<\infty$, we get, (\cite[Remark 7.2, p.~42]{GUT_BOOK_2009}),
\begin{equation*}
\lim_{c \to 0} \frac{E_0[(\widehat{F}^*_{\tau(c)+1})^{\alpha}]}{|\log c|^{\alpha}} = \frac{E[|\xi_1^*|^{\alpha}]}{D_{tot}^0},
\end{equation*}
and
\begin{equation}
\label{eq:ch3:proof_e_0_last_term}
\sup_{c >0} \frac{E_0[(\widehat{F}^*_{\tau(c)+1})^{\alpha}]}{|\log c|^{\alpha}} < \infty.
\end{equation}
From \eqref{eq:ch3:asy_pty_e_proof2}, \eqref{eq:ch3:proof_e_0_last_term}, for some $1>\delta>0$, 
\begin{IEEEeqnarray*}{rCl}
\lefteqn{\sup_{\delta > c >0}\frac{E_0[\nu(-\widehat{F}^*_{\tau(c)+1})^{\alpha}]}{|\log c|^{\alpha}}}\nonumber\\
&\leq & \sup_{\delta> c > 0} \frac{E_0[\nu(-|\log c|)^{\alpha}]}{|\log c|^{\alpha}} \\
& & \quad +\left[\epsilon+\left( \frac{-1}{\Delta(\mathcal{A}^0)} \right)^{\alpha} \right]\sup_{\delta> c > 0} \frac{E_0[(\widehat{F}^*_{\tau(c)+1})^{\alpha}]}{|\log c|^{\alpha}}\IEEEeqnarraynumspace\\
& < & \infty.
\end{IEEEeqnarray*}
Therefore, $\{\nu(-\widehat{F}^*_{\tau(c)+1})/|\log c| \}$ is uniformly integrable and hence, from \eqref{eq:as_conv_entho}, 
\begin{equation*}
\lim_{c \to 0} \frac{E_0[\nu(-\widehat{F}^*_{\tau(c)+1})]}{|\log c|} \leq -\frac{1}{\Delta(\mathcal{A}^0)}.\frac{E[|\xi_1|^*]}{D_{tot}^0}.
\end{equation*}
This, with (\ref{eq:ch3:proof_e_0_0th_term}), \eqref{eq:proof_edd_last_ext_time_exp_converg}, (\ref{eq:ch3:proof_e_0_2nd_term}) and \eqref{eq:ch3:proof_e_0_3rd_term}, implies that (since $\epsilon$ can be taken arbitrarily small),
{\allowdisplaybreaks
\begin{eqnarray*}
\overline{\lim_{c\to 0}} \frac{E_0[N]}{|\log c|} \leq \frac{1}{D_{tot}^0}+M_0,
\end{eqnarray*}}
where $ M_0=-\frac{1}{\Delta(\mathcal{A}^0)} \left[1+\frac{E_0[|\xi_1^*|]}{D_{tot}^0} \right]$. 

Similarly we can prove $ \overline{\lim}_{c\to 0} \frac{E_1[N]}{|\log c|}\leq\frac{1}{D_{tot}^1}+M_1$, where $M_1=\frac{1}{\Delta(\mathcal{A}^1)} \left[1+\frac{E_1[|\xi_1^*|]}{D_{tot}^1} \right].$ \hfill \QED

\section{Proof of Theorem \ref{thm:ch3:pe-bayes}}
\label{proof:thm_pe}
We prove the result for $P_{FA}$. For $P_{MD}$ it can be proved in the same way.

Probability of False Alarm can be written as,
\begin{IEEEeqnarray}{rCl}
\label{eq:ch3:proof_pfa_starting}
P_0(\text{Reject $H_0$})=P_0[\text{FA before $\tau(c)$}]+P_0[\text{FA after $\tau(c)$}].\IEEEeqnarraynumspace
\end{IEEEeqnarray}

Consider the first term in the R.H.S.\ of \ref{eq:ch3:proof_pfa_starting}. It can be shown that $F_k^*$ stochastically dominates $F_k$ under $H_0$. Thus we can construct $\{F_k^*\}$ such that $F_k^* \geq F_k$ a.s.\ for all $k \geq 0$ and hence
{\allowdisplaybreaks \begin{IEEEeqnarray}{rCl}
{P_0[\text{FA before $\tau(c)$}]}
&\leq &  P_0[ \sup_{0\leq k \leq \tau(c)} F_k^* \geq |\log c| ]\nonumber \\
&=& P_0[\sum_{k=0}^{\tau(c)}|\xi_k^*| \geq |\log c|] \label{eq:proof_pe_pfa_bef_tao(c)}.
\end{IEEEeqnarray}}

From \cite[Theorem 1.3]{Iksanov_ECP} $E_0[e^{\eta \, \tau_{l}(c)}]<\infty$, for $0< \eta<R^l_0$ and $ R^l_0=-\log \inf_{t \geq 0} E_0\Big[e^{-t \log \frac{f_{1,l(X_{1,l})}}{f_{0,l}(X_{1,l})}}\Big]$. Combining this fact with $\tau(c)< \sum_{l=1}^L \tau_l(c)$ and the fact that $\tau_l(c)$ are independent of each other (see (\ref{eq:ch3:proof_notn_tau})) yields $E_0[e^{\eta \tau(c)}] < E_0[e^{\sum_{l=1}^L \eta \tau_{l}(c)}]<\infty$, for $0< \eta < R_0=\min_l R^l_0$. Therefore, from Markov inequality, with $k_1=E_0[e^{\eta \tau(c)}]$,
\begin{equation}
\label{lemma:ch3:e_f_tau_tau-bound-1}
P[\tau(c)>t]\leq k_1 \exp (-\eta t).
\end{equation}
Let $\widehat{F}^*_n=\sum_{k=1}^n |\xi_k^*|$. Then, with \eqref{lemma:ch3:e_f_tau_tau-bound-1}, the expected value of $|\xi_k^*|$ being positive and with exponential tail assumption of $G_0(t)$, from \cite[Theorem 1, Remark 1]{Borovkov_1995}, \eqref{eq:proof_pe_pfa_bef_tao(c)} is,
\begin{equation}
\label{them_pe_borokov_thm}
P_0[\widehat{F}^*_{\tau(c)}>|\log c|]\leq k_2 \exp (-s_0(\eta) |\log c|),
\end{equation}
for any $0<\eta<R_0$. $k_2$ is a constant and $s_0(\eta)$ is defined in \eqref{eq:thm_pe_s-param}. 
Therefore, 
\begin{eqnarray}
\label{them_pe_borokov_thm_use}
\frac{P_0[\text{FA before $\tau(c)$}]}{c } \leq k_2 \frac{c^{s_0(\eta)}}{c} \to 0,
\end{eqnarray}
if $s_0(\eta)>1$ for some $\eta$.

Now we consider the second term in \eqref{eq:ch3:proof_pfa_starting},
\begin{IEEEeqnarray*}{rCl}
\lefteqn{P_0[\text{FA after $\tau(c)$}]}\nonumber \\
&=& P_0[\text{FA after $\tau(c)$};\mathcal{A}^0]+P_0[\text{FA after $\tau(c)$};(\mathcal{A}^0)^c]\IEEEeqnarraynumspace
\end{IEEEeqnarray*}
Since events $\{\text{FA after $\tau(c)$}\}$ and $(\mathcal{A}^0)^c$ are mutually exclusive, the second term in the above expression is zero. Now consider $P_0\left[\text{FA after $\tau(c)$};\mathcal{A}^0\right]$. For $0<r<1$,
{\allowdisplaybreaks
\begin{IEEEeqnarray}{rCl}
\lefteqn{P_0\left[\text{FA after $\tau(c)$};\mathcal{A}^0\right]}\nonumber \\
 & \leq &  P_0 \Big[\text{Random walk with drift $\Delta(\mathcal{A}^0)$} \nonumber\\
&& \qquad \quad \text{and initial value $F_{\tau(c)+1}$ crosses $|\log c|$}\Big] \nonumber\\
 & \leq & P_0 \Big[\text{Random walk with drift $\Delta(\mathcal{A}^0)$} \nonumber\\
 && \qquad \quad \text{and $F_{\tau(c)+1} \leq r |\log c|$ crosses $|\log c|$}\Big] \nonumber \\
 & & +\: P_0 \Big[\text{Random walk with drift $\Delta(\mathcal{A}^0)$}\nonumber\\
 && \qquad \quad \text{and $F_{\tau(c)+1} > r |\log c|$ crosses $|\log c|$}\Big] \nonumber\\
&\leq & P_0 \Big[\text{Random walk with drift $\Delta(\mathcal{A}^0)$}\nonumber\\
&& \qquad \quad \text{and $F_{\tau(c)+1} \leq r |\log c|$ crosses $|\log c|$}\Big] \nonumber \\
\label{eq:ch3:pfa_proof_1}
& & +\: P_0\left[F_{\tau(c)+1} > r |\log c|\right].
\end{IEEEeqnarray} }
\noindent Considering the first term in the above expression,
{\allowdisplaybreaks
\begin{IEEEeqnarray}{rCl}
\lefteqn{\Big (P_0 \big[\text{Random walk with drift $\Delta(\mathcal{A}^0)$}}\nonumber\\
\lefteqn{\qquad \quad \text{and $F_{\tau(c)+1} \leq r |\log c|$ crosses $|\log c|$}\big] \Big) /c}\nonumber \\
& \leq & \Big (P_0 \big[\text{Random walk with drift $\Delta(\mathcal{A}^0)$} \nonumber\\
&&\qquad \quad \text{and $F_{\tau(c)+1} = r |\log c|$ crosses $|\log c|$}\big] \Big) /c \nonumber \\
 \label{eq:ch3:proof_pfa_2nd_term_final}
 & \stackrel{(A)}\leq & \frac{\exp(-(1-r)|\log c|s')}{c}= \frac{c^{(1-r)s'}}{c} \to 0,
\end{IEEEeqnarray}}
iff  $(1-r) s' > 1$. Here $(A)$ follows from \cite[p.~78-79]{POO_BOOK_1}
\footnote{For a random walk $W_n=\sum_{i=1}^n X_i$, with stopping times $T_{a}=\inf\{n\geq 1:W_n \leq a \}$, $T_{b}=\inf\{n\geq 1:W_n \geq b\}$ and $T_{a,b}=\min(T_a,T_b)$, $a <0 <b$, let $s'$ be the non-zero solution to $M(s')=1$, where $M$ denotes the M.G.F.\ of $X_i$. Then, $s'<0$ if $E[X_i] >0$, and $s' >0$ if $E[X_i] <0 $ and $E[\exp(s'W_{T_{a,b}})]=1$ (\cite[p.~78-79]{POO_BOOK_1}). Then it can be shown that $P(W_{T_{a}}) \leq \exp(-s'a)$ when $E[X_i] >0$ and $P(W_{T_{b}}) \leq \exp(-s'b)$ when $E[X_i] <0$.}
 where $s'$ is positive and it is the solution of $E_0\Big[e^{s'\,\log \frac{g_{\mu_1}(Y_k)}{g_{-\mu_0}(Y_k)}}|\mathcal{A}^0\Big]=1.$ 
 
We choose $s'>1$ and $0<r<1$ to satisfy $(1-r)s' > 1$. 

Consider the second term in (\ref{eq:ch3:pfa_proof_1}). Using the stochastical dominance of $\{F_k\}$ by $\{\widehat{F}^*_k\}$,
\begin{eqnarray}
\label{eq:ch3:pfa_proof_2}
P_0\left[F_{\tau(c)+1} > r |\log c|\right] & \leq & P_0\left[\widehat{F}^*_{\tau(c)+1} > r |\log c|\right]. \nonumber
\end{eqnarray}
We have $P[\tau(c)+1>t]= P[\tau(c)>t-1] \leq k_1' \exp (-\eta t)$, where $k_1'=e^{\eta} E_0[e^{\eta \tau(c)}]$. Therefore, following \eqref{them_pe_borokov_thm},
\begin{eqnarray*}
\label{eq:ch3:proof_pfa_3rd_term_final}
\frac{P_0\left[F_{\tau(c)+1} > r |\log c|\right]}{c} &\leq & k_2' \frac{c^{r s_0(\eta)}}{c} \to 0,
\end{eqnarray*}
if $r s_0(\eta) > 1$ and $k_2'$ is a constant. We can choose $s_0(\eta) >1$ as in \eqref{them_pe_borokov_thm_use}. Then $\displaystyle \frac{1}{s_0(\eta)} < r \leq 1-\frac{1}{s'}$.\hfill \QED
\bibliographystyle{IEEEtranS}
\bibliography{IEEEabrv,mybib}

\begin{thebibliography}{10}
\providecommand{\url}[1]{#1}
\csname url@samestyle\endcsname
\providecommand{\newblock}{\relax}
\providecommand{\bibinfo}[2]{#2}
\providecommand{\BIBentrySTDinterwordspacing}{\spaceskip=0pt\relax}
\providecommand{\BIBentryALTinterwordstretchfactor}{4}
\providecommand{\BIBentryALTinterwordspacing}{\spaceskip=\fontdimen2\font plus
\BIBentryALTinterwordstretchfactor\fontdimen3\font minus
  \fontdimen4\font\relax}
\providecommand{\BIBforeignlanguage}[2]{{%
\expandafter\ifx\csname l@#1\endcsname\relax
\typeout{** WARNING: IEEEtranS.bst: No hyphenation pattern has been}%
\typeout{** loaded for the language `#1'. Using the pattern for}%
\typeout{** the default language instead.}%
\else
\language=\csname l@#1\endcsname
\fi
#2}}
\providecommand{\BIBdecl}{\relax}
\BIBdecl

\bibitem{Akyildiz_PC2011}
I.~Akyildiz, B.~Lo, and R.~Balakrishnan, ``Cooperative spectrum sensing in
  cognitive radio networks: A survey,'' \emph{Physical Communication}, vol.~4,
  no.~1, pp. 40--62, 2011.

\bibitem{Banerjee_WCOM}
T.~Banerjee, V.~Sharma, V.~Kavitha, and A.~K. JayaPrakasam, ``Generalized
  analysis of a distributed energy efficient algorithm for change detection,''
  \emph{{IEEE} Trans. Wireless Commun.}, vol.~10, no.~1, pp. 91 --101, Jan
  2011.

\bibitem{Barakat_SMA2004}
H.~M. Barakat and Y.~H. Abdelkader, ``Computing the moments of order statistics
  from nonidentical random variables,'' \emph{Statistical Methods \&
  Applications}, vol.~13, no.~1, pp. 15--26, 2004.

\bibitem{Billingsley_PM_Book}
P.~Billingsley, \emph{Probability and Measure}, 2nd~ed.\hskip 1em plus 0.5em
  minus 0.4em\relax John Wiley \& Sons, 1986.

\bibitem{Borovkov_1995}
A.~A. Borovkov, ``Unimprovable exponential bounds for distributions of sums of
  a random number of random variables,'' \emph{Theory of Probability \& Its
  Applications}, vol.~40, no.~2, pp. 230--237, 1995.

\bibitem{Dembo_LDP}
A.~Dembo and O.~Zeitouni, \emph{Large Deviations Techniques and Applications},
  2nd~ed.\hskip 1em plus 0.5em minus 0.4em\relax Springer.

\bibitem{Fellouris_TIT2011}
G.~Fellouris and G.~V. Moustakides, ``Decentralized sequential hypothesis
  testing using asynchronous communication,'' \emph{{IEEE} Trans. Inf. Theory},
  vol.~57, no.~1, pp. 534--548, 2011.

\bibitem{Govindarajulu_SS}
Z.~Govindarajulu, \emph{Sequential Statistics}.\hskip 1em plus 0.5em minus
  0.4em\relax World Scientific Pub Co Inc, 2004.

\bibitem{GUT_BOOK_2009}
A.~Gut, \emph{Stopped Random Walks: Limit Theorems and Applications},
  2nd~ed.\hskip 1em plus 0.5em minus 0.4em\relax Springer, 2009.

\bibitem{Iksanov_ECP}
A.~Iksanov and M.~Meiners, ``Exponential moments of first passage times and
  related quantities for random walks,'' \emph{Electronic Communications in
  Probability}, vol.~15, pp. 365--375, 2010.

\bibitem{Janson_AAP}
S.~Janson, ``Moments for first-passage and last-exit times, the minimum, and
  related quantities for random walks with positive drift,'' \emph{Advances in
  applied probability}, pp. 865--879, 1986.

\bibitem{Lai_AS1988}
T.~L. Lai, ``Nearly optimal sequential tests of composite hypotheses,''
  \emph{The Annals of Statistics}, pp. 856--886, 1988.

\bibitem{Li_WCOM2010}
H.~Li, H.~Dai, and C.~Li, ``Collaborative quickest spectrum sensing via random
  broadcast in cognitive radio systems,'' \emph{{IEEE} Trans. Wireless
  Commun.}, vol.~9, no.~7, pp. 2338--2348, Jul 2010.

\bibitem{Mei_TIT2008}
Y.~Mei, ``Asymptotic optimality theory for decentralized sequential hypothesis
  testing in sensor networks,'' \emph{{IEEE} Trans. Inf. Theory}, vol.~54,
  no.~5, pp. 2072 --2089, May 2008.

\bibitem{Page_Biometrika1954}
E.~Page, ``Continuous inspection schemes,'' \emph{Biometrika}, vol.~41, no.
  1/2, pp. 100--115, 1954.

\bibitem{POO_BOOK_1}
H.~V. Poor and O.~Hadjiliadis, \emph{Quickest Detection}, 1st~ed.\hskip 1em
  plus 0.5em minus 0.4em\relax Cambridge University Press, 2008.

\bibitem{Quan_SPM2008}
Z.~Quan, S.~Cui, H.~Poor, and A.~Sayed, ``Collaborative wideband sensing for
  cognitive radios,'' \emph{{IEEE} Signal Process. Mag.}, vol.~25, no.~6, pp.
  60 --73, Nov 2008.

\bibitem{Rootzen_AAP1988}
H.~Rootz{\'e}n, ``Maxima and exceedances of stationary markov chains,''
  \emph{Advances in applied probability}, pp. 371--390, 1988.

\bibitem{Ross_SP_Book}
S.~M. Ross, \emph{Stochastic Processes}, 2nd~ed.\hskip 1em plus 0.5em minus
  0.4em\relax Wiley, 1995.

\bibitem{Saaty_book}
T.~L. Saaty, \emph{Nonlinear Integral Equations}.\hskip 1em plus 0.5em minus
  0.4em\relax Dover Publications, 1981.

\bibitem{Shei_PIMRC2008}
Y.~Shei and Y.~T. Su, ``A sequential test based cooperative spectrum sensing
  scheme for cognitive radios,'' in \emph{Proc. IEEE 19th International
  Symposium on Personal, Indoor and Mobile Radio Communications (PIMRC)}, Sep
  2008.

\bibitem{Siegmund_SATC_Book}
D.~Siegmund, \emph{Sequential analysis: tests and confidence intervals}.\hskip
  1em plus 0.5em minus 0.4em\relax Springer, 1985.

\bibitem{Unnikrishnan_JSTSP2008}
J.~Unnikrishnan and V.~V. Veeravalli, ``Cooperative sensing for primary
  detection in cognitive radio,'' \emph{{IEEE} J. Sel. Topics Signal Process.},
  vol.~2, no.~1, pp. 18 --27, Feb 2008.

\bibitem{Veeravalli1999}
V.~V. Veeravalli, ``Sequential decision fusion: theory and applications,''
  \emph{Journal of the Franklin Institute}, vol. 336, no.~2, pp. 301--322,
  1999.

\bibitem{Yilmaz_Asilomar2012}
Y.~Yilmaz, G.~Moustakides, and X.~Wang, ``Spectrum sensing via event-triggered
  sampling,'' in \emph{Proc. Forty Fifth Asilomar Conference on Signals,
  Systems and Computers (ASILOMAR),}, Nov, pp. 1420--1424.

\bibitem{Yilmaz_Allerton2012_journal}
Y.~Yilmaz, G.~V. Moustakides, and X.~Wang, ``Channel-aware decentralized
  detection via level-triggered sampling,'' \emph{{IEEE} Trans. Signal
  Process.}, vol.~61, no.~2, pp. 300--315, 2013.

\bibitem{Zou_TSP}
Q.~Zou, S.~Zheng, and A.~H. Sayed, ``Cooperative sensing via sequential
  detection,'' \emph{{IEEE} Trans. Signal Process.}, vol.~58, no.~12, pp. 6266
  --6283, Dec 2010.

\end{thebibliography}
\end{document}